\Crefname{ALC@unique}{Line}{Lines}
\newcommand{\revise}[1]{{\color{black} #1}}
\newcommand{\revisee}[1]{{\color{black} #1}}
\DeclareMathOperator{\diag}{diag}
\DeclareMathOperator*{\argmin}{argmin}
\DeclareMathOperator*{\argmax}{argmax}
\newcommand{\rmd}{\mathrm d}
\newcommand{\rmN}{\mathrm N}
\newcommand{\bbR}{\mathbb R}
\newcommand{\tx}{\tilde{x}}
\newcommand{\tq}{\tilde{q}}
\newcommand{\tp}{\tilde{p}}
\newcommand{\tlambda}{\tilde{\lambda}}
\newcommand{\tnu}{\tilde{\nu}}
\DeclarePairedDelimiter\paren{\lparen}{\rparen}
\setlist[enumerate]{leftmargin=.5in}
\setlist[itemize]{leftmargin=.5in}
\crefname{hypothesis}{Hypothesis}{Hypotheses}
\pgfplotsset{compat=1.14}
\title{Estimation of ordinary differential equation models \\with discretization error quantification
}
\author{Takeru Matsuda\thanks{Department of Mathematical Informatics, Graduate School of Information Science and Technology, The University of Tokyo, Japan; Mathematical Informatics Collaboration Unit, RIKEN Center for Brain Science, Japan
 (\email{takeru.matsuda@riken.jp})}
\and Yuto Miyatake\thanks{Cybermedia Center, Osaka University, Japan
  (\email{miyatake@cas.cmc.osaka-uac.jp}).}
}
\begin{document}

\maketitle

\begin{abstract}
We consider \revise{parameter} estimation of ordinary differential equation (ODE) models from noisy observations. 
For this problem, one conventional approach is to fit numerical solutions (e.g., Euler, Runge--Kutta) of ODEs to data. 
However, such a method does not account for the discretization error in numerical solutions and has limited estimation accuracy. 
In this study, we develop an estimation method that quantifies the discretization error based on data. 
The key idea is to model the discretization error as random variables and estimate their variance simultaneously with the ODE parameter. 
The proposed method has the form of iteratively reweighted least squares, where the discretization error variance is updated with the isotonic regression algorithm and the ODE parameter is updated by solving a weighted least squares problem using the adjoint system. 
\revise{Experimental results demonstrate that the proposed method attains robust estimation with at least comparable accuracy to the conventional method by successfully quantifying the reliability of the numerical solutions.}
\end{abstract}

\begin{keywords}
discretization error, isotonic regression, parameter estimation, probabilistic numerics
\end{keywords}

\begin{AMS}
62F10, 65L05
\end{AMS}

\section{Introduction}
\label{sec:1}

A system of ordinary differential equations (ODEs) is a fundamental tool for modeling a dynamical system in many fields.
For example, \revise{spiking} neuron activity is simply described by the FitzHugh--Nagumo model \cite{f61,nay62}:
\begin{align*}
\frac{\rmd V}{\rmd t}=c\paren*{V-\cfrac{V^3}{3}+R}, \quad \frac{\rmd R}{\rmd t}=-\cfrac{1}{c} \paren*{V-a+bR}.
\end{align*}
In practice, ODE models often include unknown system parameters (e.g., $a$, $b$, and $c$ in the above model) or unknown initial state (e.g., $V(0)$ and $R(0)$ in the above model).
In this study, we focus on estimation of ODE models from noisy data.
Specifically, consider an ODE model
\begin{align*}
    \frac{\rmd }{\rmd t} x(t; \theta) = f(x(t; \theta),\theta), \quad x(0; \theta) = x_0(\theta), 
\end{align*}
where $\theta$ is an unknown \revise{identifiable} parameter.
Suppose that we have noisy observations $y_1,\dots,y_K$ of $x(t; \theta)$ at several time points $t_1,\dots,t_K$:
\begin{align}
    \label{obs_model} 
    y_k = H x(t_k; \theta) + \varepsilon_k, \quad \varepsilon_k \sim \rmN (0,\Gamma), \quad k = 1,\dots,K,
\end{align}
where $H$ is a given matrix and $\varepsilon_k$ is the observation noise with covariance $\Gamma$.
In this setting, we are interested in estimating $\theta$ based on $y_1,\dots,y_K$.

Estimation of ODE models has a distinct feature due to the intractability of the initial value problems.
Namely, ODEs do not have a closed-form solution in general, and thus, we do not have access to the exact solution $x(t_k; \theta)$ in \cref{obs_model}.
In this sense, our problem here is essentially different from \revise{usual} nonlinear regression.
Instead of the exact solution $x(t_k; \theta)$, its approximation $\tx(t_k; \theta)$ is obtained by using numerical methods for ODEs such as Euler and Runge--Kutta \cite{bu16,hn93}.
Thus, one simple conventional approach to parameter estimation in ODE models is to fit the numerical solutions to data directly.
However, this approach requires numerical solutions to be sufficiently accurate, which is computationally intensive in general.
In other words, the difference $\tx(t_k; \theta)-x(t_k; \theta)$ between the numerical and exact solutions, which we call the discretization error in the following, is not negligible.
\revise{The discretization error is regarded as one of the main uncertainties in parameter estimation of ODE models.}

In this study, we develop a method for estimating ODE models that quantifies the discretization error based on data. 
The key idea is to model the discretization error as Gaussian random variables, which is inspired from the recent studies on probabilistic numerics for differential equations \cite{cg17,hog15,ls19}, and estimate their variance simultaneously  with the ODE parameter. 
\revisee{Note that, although the assumption of discretizaiton error being Gaussian random variables may seem unrealistic, the interest here is whether such a probabilistic model leads to a useful method of parameter estimation or not.
We leave the validation and perhaps modification of the modelling assumption to our future work.}
Specifically, the proposed method has the form of Iteratively Reweighted Least Squares (IRLS), where the discretization error variance and ODE parameter are alternately updated.
The update of the discretization error variance is efficiently solved by the isotonic regression algorithm \cite{bb72,rwd88,vE06}.
On the other hand, the update of the ODE parameter is formulated as a weighted least squares problem and solved by using the exact gradient of the objective function, which is computed by the adjoint system of \cref{obs_model} and symplectic partitioned Runge--Kutta methods~\cite{ss16}.
Experimental results on several ODE models demonstrate that the proposed method attains robust estimation with at least comparable accuracy to the conventional method by successfully quantifying the reliability of the numerical solutions based on data.
For example, the proposed method is beneficial when sufficiently accurate numerical solutions, which may be obtained by the step-size control or validated numerics, are not available due to limited computational resources. 
Note that, whereas several recent studies investigated Bayesian probabilistic numerical methods for ODE models \cite{cc16,cosg19,ocag19}, we focus on non-Bayesian parameter estimation like the maximum likelihood in this study.

This paper is organized as follows.
After presenting the problem setting in \cref{sec:setting}, we explain the detail of the proposed method in \cref{sec:main}.
In \cref{sec:numer}, we show the experimental results for several ODE models.
In \cref{sec:discussion}, we give concluding remarks.
Several technical details are provided in the appendices.

\section{Problem setting}
\label{sec:setting}

Consider an $M$-dimensional ODE model
\begin{align}
    \frac{\rmd }{\rmd t} x(t; \theta) = f(x(t; \theta),\theta), \quad x(0; \theta) = x_0(\theta), 
    \label{eq:ode1}
\end{align}
where $f : \bbR^M \times \bbR^D \to \bbR^M$ and $\theta \in \Theta \subset \bbR^{D}$ is an unknown identifiable parameter.
Suppose that we have noisy observations $y_1,\dots,y_K$ of $J \: (\leq M)$ components of $x(t; \theta)$ at $K$ time points $0\leq t_1 < t_2< \cdots< t_K$:
\begin{align}
    \label{obs_model2} 
    y_k = H x_k(\theta) + \varepsilon_k, \quad \varepsilon_k \sim \rmN (0,\Gamma), \quad k = 1,\dots,K,
\end{align}
where $H \in \bbR^{J \times M}$ is \revise{a given full-rank matrix} (e.g. submatrix of the identity matrix of size $M$), $x_k(\theta) := x(t_k; \theta)$ is the exact solution of \cref{eq:ode1} at $t=t_k$, and $\varepsilon_1,\dots,\varepsilon_K$ are i.i.d. observation noise \revise{with the covariance matrix $\Gamma = \diag (\gamma_1^2,\dots, \gamma_J^2)$.}
We denote the $j$-th element of $y_k$ and $Hx_k$ by $y_{k,j}$ and $H_j x_k$, respectively, for $j=1,\dots,J$.

We consider estimation of $\theta$ based on $y_1,\dots,y_K$.
From \cref{obs_model2}, the likelihood function is given by
\begin{align*}
L(\theta) &= p(y_1,\dots,y_K \mid \theta) \nonumber \\
&= \prod_{k=1}^K \frac{1}{(2\pi)^{J/2} |\Gamma|^{1/2}}\exp \paren*{-\frac12 \paren*{y_k - H x_k(\theta)}^\top \Gamma^{-1} \paren*{y_k - H x_k(\theta)}}.
\end{align*}
Therefore, the maximum likelihood (ML) estimate is the solution of the least squares problem:
\begin{align}\label{ml_form}
\hat{\theta}_{\mathrm{ML}} = \argmax_{\theta \in \Theta} \log L(\theta) =
\argmin_{\theta \in \Theta}
\sum_{k=1}^K
\paren*{y_k - H x_k(\theta)}^\top \Gamma^{-1} \paren*{y_k - H x_k(\theta)}.
\end{align}

Although the maximum likelihood estimate $\hat{\theta}_{\mathrm{ML}}$ in \cref{ml_form} has desirable properties including asymptotic efficiency \cite{ferguson}, it involves the exact solution $x_k(\theta)$ of the ODE model \cref{eq:ode1}, which is not available in practice.
Thus, one conventional method is to substitute a numerical solution $\tx_k(\theta)$ such as Euler and Runge--Kutta \cite{bu16,hn93}, which we call the quasi-maximum likelihood (QML) estimate:
\begin{align} \label{ml_form_approx}
\hat{\theta}_{\mathrm{QML}} = \argmin_{\theta \in \Theta}
\sum_{k=1}^K
\paren*{y_k - H \tx_k(\theta)}^\top \Gamma^{-1} \paren*{y_k - H \tx_k(\theta)}.
\end{align}
If the numerical solution $\tx_k(\theta)$ is sufficiently accurate, then the estimate $\hat{\theta}_{\mathrm{QML}}$ in \cref{ml_form_approx} is considered to be close to the maximum likelihood estimate \cref{ml_form} \revise{from the identifiability assumption}.
However, the discretization error $\tx_k(\theta)-x_k(\theta)$ is not negligible in general, and thus, the estimate $\hat{\theta}_{\mathrm{QML}}$ in \cref{ml_form_approx} has limited estimation accuracy.
We give one simple example in the following.
Note that the asymptotic properties of the estimate $\hat{\theta}_{\mathrm{QML}}$ in \cref{ml_form_approx} was investigated by \cite{xmw10}.

\begin{example}
Consider initial value estimation of the harmonic oscillator:
\begin{align*}
    \frac{\rmd}{\rmd t} x(t;\theta) 
    =
    \begin{bmatrix}
    0 & 1 \\ -1 & 0
    \end{bmatrix}
    x(t;\theta) ,
    \quad
    x(0;\theta) = \theta\in\bbR^2.
\end{align*}
Here, the observation is defined as
\begin{equation*}
    y_k = x(t_k;\theta) + r_k, \quad r_k  \sim {\rm N} (0,\gamma^2I), \quad k = 1,\dots,K,
\end{equation*}
where $t_k = kh$.
The exact solution is given by
\begin{align*}
    x_k (\theta) 
    =
    M^k \theta,
    \quad
    M
    =
    \begin{bmatrix}
    \cos (h) & \sin (h) \\ -\sin (h) & \cos (h)
    \end{bmatrix}.
\end{align*}
On the other hand, the numerical solutions by several ODE solvers are given by $\tx_k (\theta) = \widetilde{M}^k \theta$ with some matrix $\widetilde{M}$. 
For example, the explicit Euler method with step size $\Delta t$ corresponds to
\begin{align*}
    \widetilde{M} = 
    \begin{bmatrix}
    1 & \Delta t \\ -\Delta t & 1
    \end{bmatrix}^{h/\Delta t}.
\end{align*}
Then, the quasi-maximum likelihood estimate $\hat{\theta}_\text{QML}$ in \cref{ml_form_approx} is given by
\begin{align*}
    \hat{\theta}_\text{QML} &= 
    \paren*{\sum_{k=1}^K (\widetilde{M}^k)^\top (\widetilde{M}^k)}^{-1}
    \paren*{\sum_{k=1}^K (\widetilde{M}^k)^\top y_k} \\
    &= \theta + b + \sum_{k=1}^K A_k r_k \sim {\rm N} \left( \theta+b, \gamma^2 \sum_{k=1}^K A_k A_k^{\top} \right),
\end{align*}
where
\begin{align}
    \label{bias_ho}
    b = \paren*{\sum_{k=1}^K (\widetilde{M}^k)^\top (\widetilde{M}^k)}^{-1} \sum_{k=1}^K (\widetilde{M}^k)^\top (M^k-\widetilde{M}^k) \theta,
\end{align}
\begin{align*}
    A_k = \paren*{\sum_{k=1}^K (\widetilde{M}^k)^\top (\widetilde{M}^k)}^{-1} (\widetilde{M}^k)^\top, \quad k=1,\dots,K.
\end{align*}
Therefore, the quasi-maximum likelihood estimate has bias $b \neq 0$ from $\widetilde{M} \neq M$.
The mean squared error (MSE) of the quasi-maximum likelihood estimate is calculated as
\begin{align}
    \label{mse_qml}
    {\rm E}_{\theta} \left[ \| \hat{\theta}_\text{QML}-\theta \|^2 \right] = \| b \|^2 + \gamma^2 \sum_{k=1}^K \mathrm{tr} (A_k A_k^{\top}) = \| b \|^2 + \gamma^2 \mathrm{tr} \paren*{\sum_{k=1}^K (\widetilde{M}^k)^\top (\widetilde{M}^k)}^{-1}.
\end{align}
On the other hand, by putting $\widetilde{M}=M$ in \cref{mse_qml}, the MSE of the maximum likelihood estimate in \cref{ml_form} is obtained as
\begin{align}
    \label{mse_ml}
    {\rm E}_{\theta} \left[ \| \hat{\theta}_\text{ML}-\theta \|^2 \right] = \gamma^2 \mathrm{tr} \paren*{\sum_{k=1}^K ({M}^k)^\top ({M}^k)}^{-1} = \frac{2 \gamma^2}{K}.
\end{align}
\revisee{\cref{fig:HO_MSE} plots the MSE of $\hat{\theta}_\text{QML}$ in \cref{mse_qml} with respect to the data length $K$ for the midpoint rule ($2$nd order) and the Runge--Kutta method ($4$th order), where $\theta=(1,0)^{\top}$, $h=2$ and $\Delta t = 0.5$.}
The MSE of $\hat{\theta}_\text{ML}$ in \cref{mse_ml} is also shown.
It is observed that the MSE of $\hat{\theta}_\text{QML}$ is significantly larger than that of $\hat{\theta}_\text{ML}$.

\color{black}
\input{HO_MSE.tex}
\end{example}

\section{Proposed method}
\label{sec:main}

In this section, we develop a method for parameter estimation in ODE models \cref{eq:ode1} that quantifies the discretization error \revise{based on data.} 
In \cref{subsec:formulation}, we introduce the main idea of modeling the discretization error as random variables and formulate the problem to simultaneous estimation of the discretization error variance  and ODE parameter.
In \cref{subsec:irls}, we propose Iteratively Reweighted Least Squares (IRLS) algorithms where the discretization error variance and ODE parameter are alternately updated.
The update schemes for the discretization error variance and the ODE parameter are explained in \cref{subsec:update_mu,subsec:update_theta}, respectively.

\subsection{Discretization error as random variable}
\label{subsec:formulation}

Let $\xi_k := \tx_k-x_k$ be the discretization error at $k$-th step. 
We model $\xi_1,\dots,\xi_K$ as independent Gaussian random variables:
\begin{align} 
    \label{assmp:dev}
    \xi_k \sim \rmN ( 0, V_k),\quad k=1,\dots,K,
\end{align}
where the \revise{covariance matrix $V_k$} quantifies the magnitude of $\xi_k$.
\revise{Such an idea of modeling the discretization error by random variables has been investigated by recent studies.
We review them in \cref{rem_pn} below.
It may seem unreasonable to assume the discretization error to be independent Gaussian random variables.
However, we introduce this assumption in a pragmatic way to obtain better estimates of ODE parameters through discretization error quantification based on probabilistic models.
We leave the validation and perhaps modification of the modelling assumption to our future work (see \cref{rem_brown} below).}

By substituting \cref{assmp:dev} into \cref{obs_model2}, we obtain
\begin{align}
    \label{model:new}
    y_k = H \tx_k(\theta) + e_k, \quad e_k \sim {\rm N} (0,\Gamma+\Sigma_k),\quad k=1,\dots,K,
\end{align}
where $e_k:=-H\xi_k+\varepsilon_k$ and $\Sigma_k := H V_k H^\top$.
Note that we used the independence of the discretization error $\xi_1,\dots,\xi_K$ and the observation noise $\varepsilon_1,\dots,\varepsilon_K$.
\revise{For simplicity, we assume that each $\Sigma_k$ is diagonal:}
\begin{align*} 
    \Sigma_k = \diag(\sigma_{k,1}^2,\dots,\sigma_{k,J}^2), \quad k=1,\dots,K.
\end{align*} 
\revise{Also, since the discretization error is considered to accumulate in every step of numerical integration, we assume that}
\begin{align} 
    \label{assmp:sigma}
    0 \leq \sigma_{1,j}^2 \leq \sigma_{2,j}^2 \leq\dots\leq \sigma_{K,j}^2 ,\quad j =1,\dots,J.
\end{align}

We estimate the discretization error variance $\Sigma:=( \Sigma_1, \dots,\Sigma_K)$ simultaneously with the ODE parameter $\theta$ by maximum likelihood.
From \cref{model:new}, the likelihood function is given by
\begin{align}
	\label{eq:likelihood}
    L(\theta,\Sigma) 
    &= p(y_1,\dots,y_K \mid \theta,\Sigma) \nonumber \\
    &= \prod_{k=1}^K \frac{1}{(2\pi)^{J/2} |\Gamma+\Sigma_k|^{1/2}}\exp \paren*{-\frac12 \paren*{y_k - H \tx_k(\theta)}^\top (\Gamma+\Sigma_k)^{-1} \paren*{y_k - H \tx_k(\theta)}}.
\end{align}
Thus, the estimate is defined as
\begin{align}
  \label{eq:formulation}
  (\hat{\theta},\hat{\Sigma}) &= \argmax_{(\theta,\Sigma) \in \Theta \times \mathcal{S}} \log L(\theta,\Sigma) \nonumber \\
    &= \argmin_{(\theta,\Sigma) \in \Theta \times \mathcal{S}}
  \sum_{k=1}^K \paren*{
  \log |\Gamma + \Sigma_k|
  +\paren*{y_k - H \tx_k(\theta)}^\top (\Gamma+\Sigma_k)^{-1} \paren*{y_k - H \tx_k(\theta)}} \nonumber \\
  &=\argmin_{(\theta,\Sigma) \in \Theta \times \mathcal{S}}
  \sum_{k=1}^K \sum_{j=1}^J
  \paren*{
  \log (\gamma_j^2 + \sigma_{k,j}^2)
  + \frac{r_{k,j}^2(\theta)}{\gamma_j^2 + \sigma_{k,j}^2}
  },
\end{align}
where $\mathcal{S}$ is the set of $\Sigma$ satisfying the order constraints \cref{assmp:sigma} and $r_{k,j}(\theta)$ is the residual defined by
\begin{align}
  r_{k,j}(\theta) =  y_{k,j} - H_j \tx_k(\theta). \label{res_def}
\end{align}

\revise{Whereas Bayesian probabilistic numerical methods for ODE models have been studied recently, the proposed method is aimed at non-Bayesian parameter estimation.
Therefore, it is free from prior selection and provides a baseline like the maximum likelihood estimator.
Also, the proposed method provides discretization error quantification based on data by solving an inverse problem.}

\begin{remark}\label{rem_pn}
The idea of modeling the discretization error by random variables was discussed recently by, for example, Arnold et al.~\cite{ac13} and Conrad et al.~\cite{cg17}. 
In~\cite{cg17}, by modeling the local truncation error by Gaussian random variables, the authors proposed to quantify the forward uncertainty by doing a number of simulations and applied this method to improve the Bayesian inference.
Convergence analyses are given in~\cite{ls19}.
To preserve properties such as positivity and symplecticity, Abdulle and Garegnani \cite{ag18} proposed to perturb the time step size instead.
\revise{We also note that one referee pointed out the possibility of modeling the discretization error by the uniform distribution supported on the a-priori interval obtained by convergence analysis of the ODE solver.}
On the other hand, several studies have investigated the relationship between ODE solvers and Gaussian process models \cite{sdh14,ssh19,tzc16} and its implications for uncertainty quantification \cite{cc16,kh16,hh14,ssh19,tksh18}.
We also note that probabilistic models have been used in the context of numerical analysis, see, e.g., Hairer~et~al.~\cite{ha08}.
\end{remark}

\begin{remark}\label{rem_brown}
\revisee{
In this study, we model the discretization error by independent Gaussian random variables.
However, this model does not capture the auto-correlation in the discretization error.
Such dependence structures may be flexibly described by the Gaussian process \cite{bo14}.
For example, since the discretization error {accumulates} in every step, it seems more natural to use the Brownian motion (random walk) model:
\begin{align*} 
(e_{1,j}, \dots, e_{K,j})^{\top} \sim {\rm N}_K ( 0,\gamma_j^2 I +\Sigma^{(j)} ), \quad j=1,\dots,J,
\end{align*} 
where
\begin{align*} 
\Sigma^{(j)} = (\sigma_{\min(a,b),j}^2)_{a,b} = \begin{pmatrix} \sigma_{1,j}^2 & \sigma_{1,j}^2 & \sigma_{1,j}^2 & \cdots & \sigma_{1,j}^2 \\ \sigma_{1,j}^2 & \sigma_{2,j}^2 & \sigma_{2,j}^2 & \cdots & \sigma_{2,j}^2 \\ \sigma_{1,j}^2 & \sigma_{2,j}^2 & \sigma_{3,j}^2 & \cdots & \sigma_{3,j}^2 \\ \vdots & \vdots & \vdots & \ddots & \vdots \\ \sigma_{1,j}^2 & \sigma_{2,j}^2 & \sigma_{3,j}^2 & \cdots & \sigma_{K,j}^2 \end{pmatrix}
\end{align*} 
with the order constraint
\begin{align*} 
0 \leq \sigma_{1,j}^2 \leq \sigma_{2,j}^2 \leq\dots\leq \sigma_{K,j}^2.
\end{align*}
Then, the likelihood function is given by
\begin{align*}
&L(\theta,\Sigma) \\
=& p(y_1,\dots,y_K \mid \theta,\Sigma) \nonumber \\
=& \prod_{j=1}^J \frac{1}{(2\pi)^{K/2} |\gamma_j^2 I+\Sigma^{(j)}|^{1/2}}\exp \paren*{-\frac12 (y^{(j)} - H_j \tx(\theta))^\top (\gamma_j^2 I+\Sigma^{(j)})^{-1} (y^{(j)} - H_j \tx(\theta))},
\end{align*}
where $y^{(j)}=(y_{1,j},\dots,y_{K,j})^{\top}$ and $H_j \tx(\theta)=(H_j \tx_1(\theta),\dots,H_j \tx_K(\theta))^{\top}$.
Thus, the estimate is defined as
\begin{align*}
(\hat{\theta},\hat{\Sigma}) &= \argmax_{(\theta,\Sigma) \in \Theta \times \mathcal{S}} \log L(\theta,\Sigma)  \\
&= \argmin_{(\theta,\Sigma) \in \Theta \times \mathcal{S}}
\sum_{j=1}^J \paren*{
	\log |\gamma_j^2 I+\Sigma^{(j)}|
	+(y^{(j)} - H_j \tx(\theta))^\top (\gamma_j^2 I+\Sigma^{(j)})^{-1} (y^{(j)} - H_j \tx(\theta))},
\end{align*}
where $\mathcal{S}$ is the set of $\Sigma$ satisfying the order constraints.
Unfortunately, the above objective function is non-convex with respect to $\Sigma$ and thus the optimization becomes more difficult than \cref{eq:formulation}.
In particular, the efficient procedure of PAVA (see \cref{subsec:update_mu}) is not applicable.
We conducted preliminary experiments with this model by using numerical optimization.
However, it was much more computationally intensive than PAVA and the results were sensitive to the initial guess.
It is an interesting future work to develop an efficient algorithm (e.g. based on Bayesian optimization) for such a more realistic model of the discretization error.
}
\end{remark}

\subsection{Iteratively reweighted least squares}
\label{subsec:irls}
Now, we develop Iteratively Reweighted Least Squares (IRLS) algorithms for solving \cref{eq:formulation}.

We introduce the weights $w=(w_{k,j})$ defined by
\begin{align*}
    w_{k,j} = \frac{1}{\gamma_j^2 + \sigma_{k,j}^2}, \quad 
    k=1,\dots,K, \quad 
    j =1,\dots,J.
\end{align*}
From \cref{assmp:sigma}, the weights have the order constraint:
\begin{align}
    \label{order_w}
    0 < w_{K,j} \leq w_{K-1,j} \leq \cdots \leq w_{1,j} \leq \frac{1}{\gamma_j^2},\quad 
    j =1,\dots,J.
\end{align}
By transforming from $(\theta,\Sigma)$ to $(\theta,w)$, the minimization problem in \cref{eq:formulation} is rewritten as
\begin{align}
    \label{eq:formulation1}
    (\hat{\theta},\hat{w}) = \argmin_{(\theta,w) \in \Theta \times \mathcal{W}} g(\theta,w),
\end{align}
where $\mathcal{W}$ is the set of $w$ satisfying the order constraints \cref{order_w} and the objective function $g$ is defined as
\begin{align}
    \label{g_def}
    g(\theta,w) = \sum_{k=1}^K \sum_{j=1}^J
    \paren*{
    -\log w_{k,j} + w_{k,j} r_{k,j}^2(\theta)
    }.
\end{align}

We solve \cref{eq:formulation1} by alternating minimization with respect to $\theta$ and $w$.
Specifically, starting from an initial guess $\theta^{(0)}$ of the ODE parameter, we iterate the following two steps: 
\begin{align}
  w^{(l)} &=  \argmin_{w \in \mathcal{W}} g(\theta^{(l-1)},w), \label{update_w} \\
  \theta^{(l)} &= \argmin_{\theta \in \Theta} g(\theta,w^{(l)}). \label{update_theta}
\end{align}
The detail of each update will be explained in the following two subsections.
Since the update of $\theta$ in \cref{update_theta} is interpreted as a weighted least squares with weights $w$,
we refer to the algorithm that iterates \cref{update_w} and \cref{update_theta} until convergence as the Iteratively Reweighted Least Squares (IRLS) algorithm (\cref{alg:irls1}). 
As will be shown in the next section, this algorithm often converges in a few iterations.
In addition, we call $L$ iterations of \cref{update_w} and \cref{update_theta} the IRLS($L$) algorithm (\cref{alg:irlsL}) for convenience.
From the order constraint \cref{order_w}, the estimated weights $\hat{w}_{k,j}$ are non-increasing with respect to $k$ and they are interpreted as quantifying the reliability of the numerical solution $\tx_{k,j}(\theta)$ in estimating $\theta$.

\begin{algorithm}[H]
\caption{Iteratively reweighted least squares (IRLS) for solving \cref{eq:formulation1}}
\label{alg:irls1}
\begin{algorithmic}[1]
\STATE{Set the initial guess $\theta^{(0)}$}
\FOR{$l=1,2,\dots$}
\STATE{Solve \cref{update_w} by PAVA (\cref{subsec:update_mu})} 
\STATE{Solve \cref{update_theta} by numerical optimization (\cref{subsec:update_theta})}
\IF{stopping criterion is satisfied}
\STATE{Return $w^{(l)}$ and $\theta^{(l)}$}
\ENDIF
\ENDFOR
\end{algorithmic}
\end{algorithm}

\begin{algorithm}[H]
\caption{Iteratively reweighted least squares $(L)$ (IRLS($L$)) for solving \cref{eq:formulation1}}
\label{alg:irlsL}
\begin{algorithmic}[1]
\STATE{Set the initial guess $\theta^{(0)}$}
\FOR{$l=1,2,\dots,L$}
\STATE{Solve \cref{update_w} by PAVA (\cref{subsec:update_mu})} 
\STATE{Solve \cref{update_theta} by numerical optimization (\cref{subsec:update_theta})}
\ENDFOR
\STATE{Return $w^{(L)}$ and $\theta^{(L)}$}
\end{algorithmic}
\end{algorithm}

\begin{remark} \label{rem_unknown}
\revise{In practice, the observation noise variances $\gamma_1^2,\dots,\gamma_J^2$ may be unknown.
Even in such cases, the proposed algorithms work by using lower bounds $\widetilde{\gamma}_j^2$ on $\gamma_j^2$: $\widetilde{\gamma}_j^2 \leq \gamma_j^2$.
Namely, the order constraint \cref{order_w} is replaced with}
\begin{align*}
    0 < w_{K,j} \leq w_{K-1,j} \leq \cdots \leq w_{1,j} \leq \frac{1}{\widetilde{\gamma}_j^2},\quad 
    j =1,\dots,J.
\end{align*}
\revise{
Such a modification also provides estimates of ${\gamma}_j^2$ as a byproduct.
We validate this modification by simulation in \cref{sec:unknown}.}
\end{remark}

\begin{remark}
\revise{We discuss the convergence property of \cref{alg:irls1,alg:irlsL}.
Since these algorithms are alternating minimization, the objective function monotonically decreases: $g(\theta^{(l+1)},w^{(l+1)}) \leq g(\theta^{(l)},w^{(l)})$.
Also, the function $g(\theta,w)$ in \cref{g_def} is bounded from below.
Therefore, the value $g(\theta^{(l)},w^{(l)})$ of the objective function converges in both algorithms.
However, it does not necessarily imply the convergence of $(\theta^{(l)},w^{(l)})$ due to the non-convexity of the objective function.
Namely, the sequences $(\theta^{(l)},w^{(l)})$ may cycle or stagnate at a non-critical point, like the example in~\cite{po73}.
One way of ensuring the convergence is to add a regularization term to each sub-problem, which is called a proximal point modification~\cite{gs00}.
Our preliminary numerical experiments indicated that such a modification may make the convergence slower.}
\end{remark}

\subsection{Update of the weights}
\label{subsec:update_mu}

Here, we provide an efficient method for the update of the weights $w$ in \cref{update_w}.

The update of the weights \cref{update_w} is solved as follows.
Its proof is given in \cref{app_proof}.

\begin{theorem} \label{thm:pava}
Let $(\mu_{1,j},\dots,\mu_{K,j})$ be the optimal solution of
\begin{align} \label{form_for_pava2}
    \min_{\mu_{1,j}\leq\dots\leq \mu_{K,j}<0} \sum_{k=1}^K \paren*{\Phi (\mu_{k,j})-\mu_{k,j} r_{k,j}(\theta^{(l-1)})^2 },
\end{align}
where $\Phi(\mu)=-\log(-\mu)$, for $j=1,\dots,J$.
Then, the solution of \cref{update_w} is given by $w^{(l)}_{k,j}=-\max(\mu_{k,j},-1/\gamma_j^2)$ for $j=1,\dots,J$.
\end{theorem}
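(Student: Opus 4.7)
My strategy is to recast \cref{update_w} as a box-constrained isotonic regression and recognize \cref{form_for_pava2} as the same problem with the lower bound of the box relaxed, so that the theorem reduces to showing that the box constraint can be handled by coordinatewise truncation. The substitution $\mu_{k,j} := -w_{k,j}$ maps $-\log w_{k,j} \mapsto \Phi(\mu_{k,j})$ and $w_{k,j} r_{k,j}^2 \mapsto -\mu_{k,j} r_{k,j}^2$, while the constraint $0 < w_{K,j} \leq \dots \leq w_{1,j} \leq 1/\gamma_j^2$ becomes $-1/\gamma_j^2 \leq \mu_{1,j} \leq \dots \leq \mu_{K,j} < 0$. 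Since both the objective and the constraints in \cref{update_w} separate across $j$, I can work one $j$ at a time; \cref{form_for_pava2} is exactly this problem without the lower bound $\mu_{1,j} \geq -1/\gamma_j^2$. It therefore suffices to show that the box-constrained optimum equals the coordinatewise truncation $\tilde\mu_{k,j} := \max(\mu_{k,j}, -1/\gamma_j^2)$ of the optimum of \cref{form_for_pava2}; the formula $w^{(l)}_{k,j} = -\tilde\mu_{k,j}$ is then the claim of the theorem.

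\textbf{KKT construction.} Each summand $\Phi(\mu) - \mu r^2$ is strictly convex on $(-\infty,0)$ because $\Phi''(\mu) = 1/\mu^2 > 0$, so the KKT conditions are both necessary and sufficient. Let $\mu_{k,j}$ be the PAVA optimum of \cref{form_for_pava2} with monotonicity multipliers $\nu_k \geq 0$, and let $k_0$ denote the largest index with $\mu_{k_0,j} < -1/\gamma_j^2$ (with $k_0 = 0$ if none). On the tail $k > k_0$ I would set $\tilde\nu_k = \nu_k$; then the stationarity and complementary-slackness conditions of the box-constrained problem follow directly from the unconstrained KKT, since $\tilde\mu_{k,j} = \mu_{k,j}$ there. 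On the truncated prefix $k \leq k_0$, $\tilde\mu_{k,j} = -1/\gamma_j^2$ is constant, and the telescoping stationarity $\tilde\nu_{k-1} = \tilde\nu_k + (\gamma_j^2 - r_{k,j}^2)$ with the interface value $\tilde\nu_{k_0} = 0$ (forced by $\mu_{k_0,j} < \mu_{k_0+1,j}$ via unconstrained complementary slackness when $k_0 < K$, and trivial if $k_0 = K$) gives $\tilde\nu_{k-1} = \sum_{\ell=k}^{k_0} (\gamma_j^2 - r_{\ell,j}^2)$. The multiplier for the box bound $\mu_{1,j} \geq -1/\gamma_j^2$ is $\tilde\eta = \tilde\nu_0 = \sum_{\ell=1}^{k_0}(\gamma_j^2 - r_{\ell,j}^2)$, and complementary slackness on the prefix is automatic because $\tilde\mu_{1,j} = -1/\gamma_j^2$.

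\textbf{Main obstacle.} The essential difficulty is verifying $\tilde\nu_{k-1}, \tilde\eta \geq 0$, which I would resolve through the block structure of $\mu_{k,j}$ inside $\{1,\dots,k_0\}$. Each PAVA block $B \subseteq \{1,\dots,k_0\}$ has constant value $c_B = -|B|/\sum_{\ell \in B} r_{\ell,j}^2 < -1/\gamma_j^2$, hence mean residual $\bar r_B^2 := (1/|B|)\sum_{\ell \in B} r_{\ell,j}^2 < \gamma_j^2$, so $\sum_{\ell \in B}(\gamma_j^2 - r_{\ell,j}^2) > 0$ for each full block. For an index $k$ interior to some block $B_q = \{s_q,\dots,t_q\}$, split
\[
\sum_{\ell=k}^{k_0}(\gamma_j^2 - r_{\ell,j}^2) = \sum_{\ell=k}^{t_q}(\gamma_j^2 - r_{\ell,j}^2) + \sum_{q' > q}\,\sum_{\ell \in B_{q'}}(\gamma_j^2 - r_{\ell,j}^2).
\]
The second piece is nonnegative by the block-wise inequality. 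For the first piece, the standard suffix property of PAVA blocks (that the mean over any suffix of $B_q$ is at most the block mean) yields $\sum_{\ell=k}^{t_q}(\bar r_{B_q}^2 - r_{\ell,j}^2) \geq 0$; combined with $\gamma_j^2 > \bar r_{B_q}^2$, this gives $\sum_{\ell=k}^{t_q}(\gamma_j^2 - r_{\ell,j}^2) \geq 0$. Summing, $\tilde\nu_{k-1} \geq 0$ for every $k \leq k_0$, and the same computation at $k=1$ delivers $\tilde\eta \geq 0$. Thus $\tilde\mu$ is KKT-feasible for the box-constrained problem, and by strict convexity it is the unique optimum, so $w^{(l)}_{k,j} = -\tilde\mu_{k,j} = -\max(\mu_{k,j}, -1/\gamma_j^2)$, as claimed.
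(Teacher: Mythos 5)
Your proof is correct, and its skeleton is the same as the paper's: reduce \cref{update_w} to a box-constrained version of \cref{form_for_pava2} via $\mu=-w$, truncate the unconstrained optimum at $-1/\gamma_j^2$, keep the tail multipliers, generate the prefix multipliers backwardly from stationarity, and invoke convexity to conclude that the KKT point is the optimum. This is precisely what the paper does in \cref{app_proof}, where the statement is proved for a general strictly convex $\Phi$ and a general lower bound $\alpha$. Where you genuinely diverge is the one nontrivial step, dual feasibility of the prefix multipliers. The paper never computes them: it shows by backward induction that the constructed multipliers dominate the unconstrained ones, $\hat{\lambda}_k \geq \hat{\eta}_k \geq 0$, using only that $\Phi'$ is increasing and that the truncated value $\alpha$ exceeds the unconstrained value; this argument is self-contained and works verbatim for any strictly convex $\Phi$. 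You instead derive the closed form $\tilde{\nu}_{k-1} = \sum_{\ell=k}^{k_0}(\gamma_j^2 - r_{\ell,j}^2)$ and prove nonnegativity from the PAVA block structure: truncated blocks satisfy $\bar{r}_B^2 < \gamma_j^2$ because their value $\phi^{-1}(\bar{r}_B^2)$ lies below $-1/\gamma_j^2$, and partial blocks are handled by the greatest-convex-minorant suffix property (suffix means do not exceed the block mean). Your route buys an explicit, quantitative description of the multipliers, but at the cost of importing two structural facts about generalized isotonic regression (block value equals $\phi^{-1}$ of the block mean of the data; the suffix-mean inequality) that you assert as standard rather than prove, and of tying the computation to the specific $\Phi(\mu)=-\log(-\mu)$ — though it would generalize by replacing $\gamma_j^2$ with $\Phi'(\alpha)$ throughout. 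Both arguments are valid; the paper's induction is the leaner and more general of the two.
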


The optimization problem \cref{form_for_pava2} is efficiently solved by an algorithm called the pool adjacent violators algorithm (PAVA)~\cite{bb72,rwd88,vE06}.
Specifically, let $S_0 = 0$ and $S_k = r_{1,j}^2+\dots + r_{k,j}^2$ for $k = 1,\cdots,K$. 
Consider the line graph connecting $(0,S_0),(1,S_1),\dots,(K,S_K)$ (dashed line in \cref{fig:gcm}).
We can efficiently compute the maximal convex function, called the greatest convex minorant, on $[0,K]$ which lies entirely below this graph (solid line in \cref{fig:gcm}).
Let $\overline{S}_k$ be the value of the greatest convex minorant at $k$ for $k=0,1,\dots,K$.
Then, the optimal solution to \cref{form_for_pava2} is given by
\begin{align*} 
    \hat{\mu}_k= \phi^{-1} (\overline{S}_k - \overline{S}_{k-1}) = -\frac{1}{\overline{S}_k - \overline{S}_{k-1}}, \quad k = 1,\dots, K,
\end{align*}
where $\phi(\mu)=\Phi'(\mu)=-1/\mu$.
See \cite{rwd88} for the proof.

The computational cost for updating the weights is much cheaper than that for a single numerical integration of the system of ODEs \cref{eq:ode1} in most cases.
Note also that the weights $w=(w_{k,j})$ for different $j$ can be updated in parallel.

\begin{figure}[htbp]
    \centering
    \begin{tikzpicture}[domain=0:4]
    \draw[->] (-0.2,0) -- (5.2,0) node[right] {$k$};
    \draw[->] (0,-0.2) -- (0,4.2) node[above] {$S$};
    \draw[thick,dashed] (0,0) -- (1,0.8) -- (2,0.9) --(3,1.6) -- (4,3.6) -- (5,4);
    \draw[thick, color=black, opacity=0.7] (0,0) -- (2,0.9) --(3,1.6) --(5,4);
    \draw[] (1,0) -- (1,-0.1) node[below] {$1$};
    \draw[] (2,0) -- (2,-0.1) node[below] {$2$};
    \draw[] (3,0) -- (3,-0.1) node[below] {$3$};
    \draw[] (4,0) -- (4,-0.1) node[below] {$4$};
    \draw[] (5,0) -- (5,-0.1) node[below] {$5$};
    \node[left] at (0,0.3) {\small $(0,S_0)$};
    \node[above] at (0.7,0.8) {\small $(1,S_1)$};
    \node[below] at (2.2,0.8) {\small $(2,S_2)$};
    \node[right] at (3,1.6) {\small $(3,S_3)$};
    \node[left] at (4,3.6) {\small $(4,S_4)$};
    \node[below] at (5.4,4) {\small $(5,S_5)$};
    \end{tikzpicture}
    
    \caption{The line graph with $K=5$ (dashed line) and its greatest convex minorant (solid line).}
    \label{fig:gcm}
\end{figure}
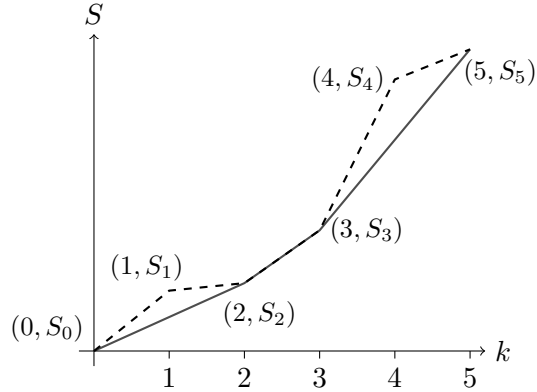

\begin{remark}
PAVA is a general algorithm for the problem of estimating ordered natural parameters of exponential families, which is called the generalized isotonic regression \cite{bb72,rwd88,vE06}, and the update of the weights is naturally interpreted in this context.
Namely, from \cref{model:new}, {the} square of the residual \cref{res_def} follows the chi-square distribution:
\begin{align*}
    r_{k,j}(\theta)^2 \sim (\gamma_j^2+\sigma_{k,j}^2) \chi_1^2.
\end{align*}
In other words, {$r_{k,j}(\theta)^2$} follows the Gamma distribution with shape parameter $1/2$ and scale parameter $w_{k,j}^{-1}=\gamma_j^2+\sigma_{k,j}^2$:
\begin{align*}
    r_{k,j}(\theta)^2 \sim {\rm Gamma} \left( \frac{1}{2}, \frac{2}{w_{k,j}} \right).
\end{align*}
In this way, the procedure above can be interpreted as applying PAVA to the maximum likelihood estimation of ordered scale parameters of Gamma distributions (see Example 1.5.3 in \cite{rwd88}).
\end{remark}

\begin{remark}\label{rem:conrad}
\revisee{
Instead of solving \cref{update_w}, we can employ the probabilistic ODE solver by \cite{cg17} for updating the weights.
Specifically, by solving the forward problem many times with Gaussian perturbation in each step, the discretization error variance $\sigma_{k,j}^2$ can be estimated.
Whereas this method enables discretization error quantification without the monotonicity condition \cref{assmp:sigma} (see \cref{fig:Lorenz_variables_error_conrad}), it takes much more computational cost than PAVA, which only requires one integration.
We confirmed that PAVA attains comparable estimation accuracy to the probabilistic ODE solver in simulation (\cref{fig:Lorenz_comparison_conrad}).
}
\end{remark}

\subsection{Update of the ODE parameter}
\label{subsec:update_theta}

The update of the ODE parameter $\theta$ in \cref{update_theta} is interpreted as solving the weighted least squares problem:
\begin{align}
    \label{update_theta_opt}
    \theta^{(l)} = \argmin_{\theta \in \Theta} \tilde{R}(\theta),
\end{align}
where
\begin{align}
    \tilde{R}(\theta) = \sum_{k=1}^K \tilde{R}_k (\theta), \quad \tilde{R}_k (\theta) = \sum_{j=1}^J w^{(l)}_{k,j} (y_{k,j}-H_j \tx_k(\theta))^2.
    \label{obfun}
\end{align}
We employ numerical optimization to solve \cref{update_theta_opt}.
In order to use a gradient method such as the quasi-Newton method and nonlinear conjugate gradient method, the gradient $\nabla_{\theta} \tilde{R}(\theta)$ of the objective function $\tilde{R}(\theta)$ is necessary.
However, since the objective function $\tilde{R}(\theta)$ is implicitly defined through the ODE solver $\tx_k(\theta)$, computation of its gradient is not trivial.
In this subsection, we briefly review a method for efficiently computing the exact gradient of $\tilde{R}(\theta)$. 
The key ingredients are the adjoint system and symplectic partitioned Runge--Kutta scheme~\cite{sa92}. 
See \cite{ss16} for more detail.

\revise{First,} let us focus on an $M$-dimensional ODE model with unknown initial state:
\begin{align}
    \label{eq:original1}
    \frac{\rmd}{\rmd t} x(t; \theta) = {f}(x(t; \theta)), \quad x(0; \theta) = \theta.
\end{align}
\revise{More general case with unknown system parameter will be treated in the last paragraph of this subsection.}
The adjoint system of \cref{eq:original1} is defined as
\begin{align} \label{eq:adjoint1}
    \frac{\rmd}{\rmd t} \lambda (t) = - \nabla_x {f}(x(t; \theta))^\top \lambda(t),
\end{align}
where $\lambda(t) \in \bbR^{M}$ and $\nabla_x {f}(x(t; \theta))$ denotes the Jacobian matrix of ${f}$ at \revise{$
x(t; \theta)$}.
For a function of the form $R(\theta)=J(x(T;\theta))$, the backward solution of the adjoint system \cref{eq:adjoint1} with the final state condition $\lambda (T) = \nabla_x J (x(T;\theta))$ satisfies $\lambda (0) = \nabla_\theta R (\theta)$ (see Proposition 3.1 in \cite{ss16}).
Similar relation holds when the original system \cref{eq:original1} and adjoint system \cref{eq:adjoint1} are discretized by a symplectic partitioned Runge--Kutta scheme~\cite{sa92}, which is a well-known scheme in the context of \emph{Geometric Integration}~\cite{hl06,sa94} (see \cref{app_gi} for a brief introduction of geometric integration and symplectic partitioned Runge--Kutta methods). 
Specifically, let $\tilde{x}_k(\theta)$ and $\tilde{\lambda}_k$ be the numerical solution of \cref{eq:original1} and \cref{eq:adjoint1} by a symplectic partitioned Runge--Kutta scheme.
Then, for a function of the form $\tilde{R}(\theta)=J(\tilde{x}_K(\theta))$, the backward solution of the adjoint system \cref{eq:adjoint1} with the final state condition $\tilde{\lambda}_K = \nabla_x J (\tilde{x}_K(\theta))$ satisfies $\tilde{\lambda}_0 = \nabla_\theta \tilde{R} (\theta)$ (see Theorem 3.4 in \cite{ss16}).
For example, if the ODE model \cref{eq:original1} is discretized by the explicit Euler scheme
\begin{align}
    \label{eq:exEuler}
    \frac{\tilde{x}_{k+1}(\theta)-\tilde{x}_k(\theta)}{t_{k+1}-t_k} = f(\tilde{x}_k(\theta)),
\end{align}
then the adjoint system \cref{eq:adjoint1} should be discretized as
\begin{align}
    \label{eq:Euler_sym}
    \frac{\tlambda_{k+1}-\tlambda_k}{t_{k+1}-t_k} =-\nabla_x f(\tilde{x}_{k}(\theta)) ^\top \tlambda_{k+1},
\end{align}
which is an explicit scheme for backward integration. 

\revise{Now, since the adjoint system \cref{eq:adjoint1} is linear, the gradient of the objective function $\tilde{R}(\theta)$ in \cref{update_theta_opt}, which is the sum of $\tilde{R}_k(\theta)$ in \cref{obfun}, is computed by \cref{alg:gradient} when the explicit Euler method~\cref{eq:exEuler} is employed for the numerical solution.}
Namely, the output $\tlambda_0$ of \cref{alg:gradient} coincides with $\nabla_\theta \tilde{R}(\theta)$.
\revise{The calculation is exact up to round-off.}
This procedure is extended straightforwardly to more general Runge--Kutta schemes.
In \cref{subsec:Kepler}, we will use the Störmer--Verlet method (see \cref{app_sv}) for the numerical solution of the Kepler equation.
Also note that, in practice, we can adopt a composition of Runge--Kutta schemes with time step sizes smaller than the observation interval $t_{k+1}-t_k$, as will be done in \cref{sec:numer}.

\begin{algorithm}[H]
    \caption{Exact calculation of the gradient $\nabla_\theta \tilde{R}(\theta)$ of $\tilde{R}(\theta)$ in \cref{obfun}}
\label{alg:gradient}
\begin{algorithmic}[1]
\STATE{Compute the numerical solutions $\tx_1,\dots,\tx_K$ for \cref{eq:original1} using the Euler method \cref{eq:exEuler}}
\STATE{Set $\tlambda_K = \nabla_x \tilde{R}_k (\tx_K (\theta))$}
\FOR{$k=1,\dots,K$}
\STATE{Calculate $\tlambda_{K-k}$ using \cref{eq:Euler_sym}}
\IF{$k=K$}
\STATE{Return $\tlambda_0$}
\ENDIF
\STATE{$\tlambda_{K-k} \leftarrow \tlambda_{K-k} + \nabla_x \tilde{R}_{K-k} (\tx_{K-k} (\theta))$ }
\ENDFOR
\end{algorithmic}
\end{algorithm}

{Although we focused on the ODE model with unknown initial state \cref{eq:original1}, the above method is applicable to the ODE model \cref{eq:ode1} with unknown system parameter as well.
Specifically, if the ODE model \cref{eq:ode1} includes the unknown system parameter $\theta_S$, then it is reduced to the form of \cref{eq:original1} as follows:
\begin{align*}
    \frac{\rmd}{\rmd t} z(t; \theta) = \begin{bmatrix} f(z(t; \theta)) \\ 0 \end{bmatrix}, \quad z(0; \theta) = \begin{bmatrix} x_0(\theta) \\ \theta_S \end{bmatrix},
\end{align*}
where
\begin{align*}
    z(t; \theta) = \begin{bmatrix} x(t; \theta) \\ u(t; \theta) \end{bmatrix}.
\end{align*}
\revise{Here, we introduced the constant variable $u(t;\theta) \equiv \theta_S$ so that the unknown system parameter can be treated in the same way with the unknown initial state.}
Thus, \cref{alg:gradient} efficiently computes the gradient of $\tilde{R}(\theta)$ in \cref{obfun}.
}
\revisee{
\begin{remark}
Recently, the method of \cite{ss16} has been extended to efficiently compute the exact Hessian-vector multiplication $\nabla_{\theta}^2 \tilde{R}(\theta) \cdot v$ by using the second-order adjoint system \cite{im19}.
This method enables to use a second-order optimization algorithm such as the Newton method to minimize $\tilde{R}(\theta)$.
\end{remark}
}
\subsection{Confidence intervals}
\label{subsec:confidence}
\revisee{
To obtain uncertainty quantification of the point estimate $\hat{\theta}$ by the IRLS algorithm, we develop a method for constructing confidence intervals of ${\theta}$.

For the likelihood function $L(\theta,\Sigma)$ in \cref{eq:likelihood}, let
\begin{align*}
	l_i (\theta_i) = \max_{\theta^{(-i)},\Sigma} \ \log L (\theta,\Sigma), \quad i=1,\dots,D,
\end{align*}
be the profile likelihood for $\theta_i$, where $\theta^{(-i)}=(\theta_1,\dots,\theta_{i-1},\theta_{i+1},\dots,\theta_D)$.
Note that the IRLS estimate $\hat{\theta}$ satisfies
\begin{align*}
	l_i(\hat{\theta}_i)=\max_{\theta_i} \ l_i(\theta_i), \quad i=1,\dots,D.
\end{align*}
Under the null hypothesis $\theta_i=\theta_{i0}$, the profile likelihood-ratio statistic
\begin{align*}
	\mathrm{LR}_i = 2 (l_i (\hat{\theta}_i) - l_i(\theta_{i0}))
\end{align*}
is asymptotically distributed with the chi-square distribution with one degree of freedom.
Thus, we construct a 95 \% confidence interval $[\underline{\theta}_i, \bar{\theta}_i]$ of $\theta_i$ by using $\underline{\theta}_i$ and $\bar{\theta}_i$ defined by
\begin{align*}
	l_i(\underline{\theta}_i) = l_i(\bar{\theta}_i) = l_i(\hat{\theta}_i)- \frac{1}{2}(\chi^2_1)^{-1} (0.95), \quad \underline{\theta}_i \leq \hat{\theta}_i \leq \bar{\theta}_i.
\end{align*}
If $(\hat{\theta},\hat{\Sigma})$ lies in the interior of the parameter space, then $\underline{\theta}_i$ and $\bar{\theta}_i$ can be found by using the Newton--Raphson algorithm \cite{vm88}.
However, it is not applicable here because some of the order constraints on $\Sigma$ are usually active.
Thus, we use the bisection method to find $\underline{\theta}_i$ and $\bar{\theta}_i$.
}

\section{Numerical experiments}
\label{sec:numer}
In this section, we investigate the performance of the IRLS algorithms by numerical experiments on three ODE models: Lorenz system, FitzHugh--Nagumo model and Kepler equation. 
Experiments were performed in a computation environment with 1.6 GHz Intel Core i5, 16 GB memory, Mac OS X 10.14.4, and MATLAB (R2019a).

In each experiment, we generated observation data $y_1,\dots,y_K$ by solving the ODE model \cref{eq:ode1} with the MATLAB function \textsf{ode45} and then adding Gaussian observation noise following \cref{obs_model2}.
In \textsf{ode45}, both relative and absolute error tolerances were set to $3.0\times 10^{-14}$, which is much smaller than the observation noise variance employed below.

For the Lorenz system, we used the explicit Euler method\revise{, the Heun method and the (4-stage 4-th order) Runge--Kutta (RK) method to obtain} the numerical solution $\tx(t; \theta)$.
For the FitzHugh--Nagumo model, we employed the explicit Euler method.
For the Kepler equation, we used the Störmer--Verlet method, which is a symplectic integrator for Hamiltonian systems~\cite{hl06} (see \cref{app_sv}).
Note that the time step size was set to be smaller than the observation interval in all cases.

To solve the weighted least squares in \cref{update_theta}, we used the MATLAB function \textsf{fminunc} (quasi-Newton algorithm; BFGS) with the gradient computed by the method in \cref{subsec:update_theta}.
We have also tested \textsf{fminunc} (trust-region algorithm) and the nonlinear conjugate gradient method, but the quasi-Newton algorithm was the fastest in many cases.

As shown in \cref{fig:Lorenz_estimation1}, the IRLS algorithms converged within a few iterations in most cases.
Thus, instead of providing a specific stopping criterion for IRLS (\cref{alg:irls1}), we simply identify IRLS(20) (\cref{alg:irlsL}) as IRLS (\cref{alg:irls1}).

\subsection{Lorenz system}
Here, we consider the Lorenz system~\cite{lo63}:
\begin{align*}
\frac{\rmd}{\rmd t} \begin{bmatrix}
x_1 \\ x_2 \\ x_3
\end{bmatrix}
=\begin{bmatrix}
\sigma (-x_1 + x_2) \\
x_1 (\rho-x_3)-x_2 \\
x_1x_2-\beta x_3
\end{bmatrix}, \quad
\begin{bmatrix}
x_1(0) \\ x_2(0) \\ x_3(0)
\end{bmatrix}
=
\begin{bmatrix}
-10 \\ -1 \\ 40
\end{bmatrix},
\end{align*}
where $(\sigma , \rho, \beta ) = (10, 28, 8/3)$.

We {consider} simultaneous estimation of {both the initial state and system parameter $\theta = (x_1(0),x_2(0),$ $x_3(0),\sigma , \rho, \beta )$}.
\revise{In \cref{subsubsec:Lorenz_performance}, \cref{subsubsec:Lorenz_uq} and \cref{sec:unknown}, the observations of $(x_1,x_2,x_3)$ with observation noise variance $\Gamma = \diag (0.5,0.1,0.1)$} are taken at $t_k=(k-1)h$ for $k=1,\dots,K$, where $h=0.01$ and $K=201$.  Note that similar results were obtained for other settings of $\Gamma$.
\revise{In \cref{subsubsec:Lorenz_limited}, we consider the case of limited observation: only $(x_1,x_2)$ is observed with observation noise variance $\Gamma = \diag (0.5,0.1)$ at $t_k=(k-1)h$ for $k=11,\dots,K$, where $h=0.01$ and $K=201$.}
The initial guess is set to $\theta^{(0)} = [-9, -1.5, 39, 11,29,3]$.
For the ODE solver $\tx_k(\theta)$, we employ \revise{the explicit Euler, Heun and RK methods.} 

\revise{To evaluate the estimation accuracy, we use the error measure defined by}
\begin{align}
    \label{Lorenz_err}
    l(\hat{\theta},\theta) = \int_a^b \| x(t; \hat{\theta})-x(t; {\theta}) \|^2 {\rm d} t,
\end{align}
\revise{where $[a,b]$ is the observation interval.
This error measure naturally accounts for the complexity of the map from the parameter $\theta$ to the trajectory $x(t;\theta)$.
To compute the integral in \cref{Lorenz_err}, we employ the trapezoidal rule with $10^4$ points and regard the outputs of \textsf{ode45} as $x(t; \theta)$ and $x(t; \hat{\theta})$.}

\subsubsection{Performance of IRLS algorithms}
\label{subsubsec:Lorenz_performance}
First, we check the performance of the IRLS algorithms. 

\cref{fig:Lorenz_estimation1} plots the objective function $g(\theta^{(l)},w^{(l)})$ in \cref{g_def} and error $l({\theta}^{(l)},\theta)$ in \cref{Lorenz_err}
with respect to the iteration count $l$ for each \revise{ODE solver}. 
The initial guess and step size were set to $\theta^{(0)} = [-9, -1.5, 39, 11,29,3]$ and $\Delta t = 5.0 \times 10^{-3}$, respectively. 
The IRLS algorithms converge within a few iterations, and the estimation accuracy is better for {higher order solvers.}

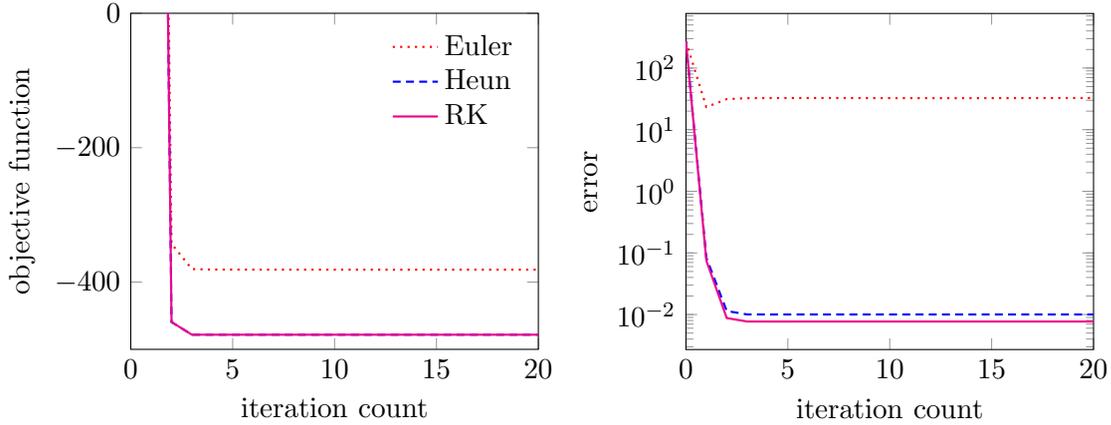
\begin{figure}[htbp]
\centering
\begin{tabular}{ll}
\begin{tikzpicture}
\tikzstyle{every node}=[]
\begin{axis}[width=7cm,
xmax=20,xmin=0,
ymax = 0,ymin = -500,
xlabel={iteration count},ylabel={objective function},
ylabel near ticks,
legend entries={Euler,Heun,RK},
legend style={legend cell align=left,draw=none,fill=none,legend pos=north east},
	]
\addplot[thick, dotted, color=red,
filter discard warning=false, unbounded coords=discard
] table {
0 3471.38210697
1 2165.29299966
2 -342.90640973
3 -380.58818240
4 -381.29740520
5 -381.38789562
6 -381.43746166
7 -381.46915813
8 -381.49014094
9 -381.50438681
10 -381.51426238
11 -381.52122896
12 -381.52621661
13 -381.52983275
14 -381.53248301
15 -381.53444355
16 -381.53590562
17 -381.53700361
18 -381.53783323
19 -381.53846341
20 -381.53894432
};
\addplot[thick, densely dashed, color=blue,
filter discard warning=false, unbounded coords=discard
] table {
0 3467.37891293
1 2154.97814840
2 -459.40803521
3 -478.25906789
4 -478.26546282
5 -478.26546660
6 -478.26546661
7 -478.26546661
8 -478.26546661
9 -478.26546661
10 -478.26546661
11 -478.26546661
12 -478.26546661
13 -478.26546661
14 -478.26546661
15 -478.26546661
16 -478.26546661
17 -478.26546661
18 -478.26546661
19 -478.26546661
20 -478.26546661
};
\addplot[thick, color=magenta,
filter discard warning=false, unbounded coords=discard
] table {
0 3467.39002395
1 2155.02272961
2 -459.34895644
3 -478.23595756
4 -478.24235741
5 -478.24236119
6 -478.24236120
7 -478.24236120
8 -478.24236120
9 -478.24236120
10 -478.24236120
11 -478.24236120
12 -478.24236120
13 -478.24236120
14 -478.24236120
15 -478.24236120
16 -478.24236120
17 -478.24236120
18 -478.24236120
19 -478.24236120
20 -478.24236120
};
\end{axis}
\end{tikzpicture} 
& \hspace{-1em}
\begin{tikzpicture}
\tikzstyle{every node}=[]
\begin{axis}[width=7cm,
xmax=20,xmin=0,
ymode = log,
xlabel={iteration count},ylabel={error},
ylabel near ticks,
	]
\addplot[thick, dotted, color=red,
filter discard warning=false, unbounded coords=discard
] table {
0 269.67129439
1  23.29193660
2  31.29627373
3  32.44825983
4  32.55472007
5  32.60161347
6  32.61853031
7  32.61188539
8  32.57127270
9  32.53039622
10  32.49462375
11  32.46355529
12  32.43097641
13  32.42210722
14  32.45016662
15  32.48264889
16  32.51889522
17  32.55606212
18  32.58982145
19  32.62343057
20  32.66566177
};
\addplot[thick, densely dashed, color=blue,
filter discard warning=false, unbounded coords=discard
] table {
0 269.67129439
1   0.08375800
2   0.01141503
3   0.01004958
4   0.01003272
5   0.01003316
6   0.01003319
7   0.01003319
8   0.01003319
9   0.01003319
10   0.01003319
11   0.01003319
12   0.01003319
13   0.01003319
14   0.01003319
15   0.01003319
16   0.01003319
17   0.01003319
18   0.01003319
19   0.01003319
20   0.01003319
};
\addplot[thick, color=magenta,
filter discard warning=false, unbounded coords=discard
] table {
0 269.67129439
1   0.07487265
2   0.00878359
3   0.00771364
4   0.00770420
5   0.00770473
6   0.00770477
7   0.00770477
8   0.00770477
9   0.00770477
10   0.00770477
11   0.00770477
12   0.00770477
13   0.00770477
14   0.00770477
15   0.00770477
16   0.00770477
17   0.00770477
18   0.00770477
19   0.00770477
20   0.00770477
};
\end{axis}
\end{tikzpicture} 
\end{tabular}

\caption{
Objective function $g(\theta^{(l)},w^{(l)})$ in \eqref{g_def} and error $l(\theta^{(l)},\theta)$ in \eqref{Lorenz_err} of IRLS (Algorithm~\ref{alg:irls1}) for the Lorenz system.
}
\label{fig:Lorenz_estimation1}
\end{figure}

\cref{fig:Lorenz_comparison1} plots the error in \cref{Lorenz_err} of the IRLS algorithms and the conventional method in \cref{ml_form_approx} with respect to the step size $\Delta t$. 
\revise{
For the Euler method with large step size ($\Delta t \geq 5.0 \times 10^{-3}$), only a single iteration (IRLS$(1)$) gives better estimates than the conventional method.
For the Euler method with middle step size ($5.0 \times 10^{-3} \leq \Delta t \leq 5.0 \times 10^{-5}$), the estimation accuracy of the IRLS algorithms is almost the same with the conventional method. 
For the Euler method with small step size ($\Delta t \leq 5.0 \times 10^{-5}$), the IRLS algorithms with more than one iterations provide better estimates than the conventional method.
On the other hand, for the RK method, the estimation accuracy is almost independent from the step size and the IRLS algorithms with more than one iterations consistently dominate the conventional method.
The results for the Heun method were almost the same with the RK method.
}

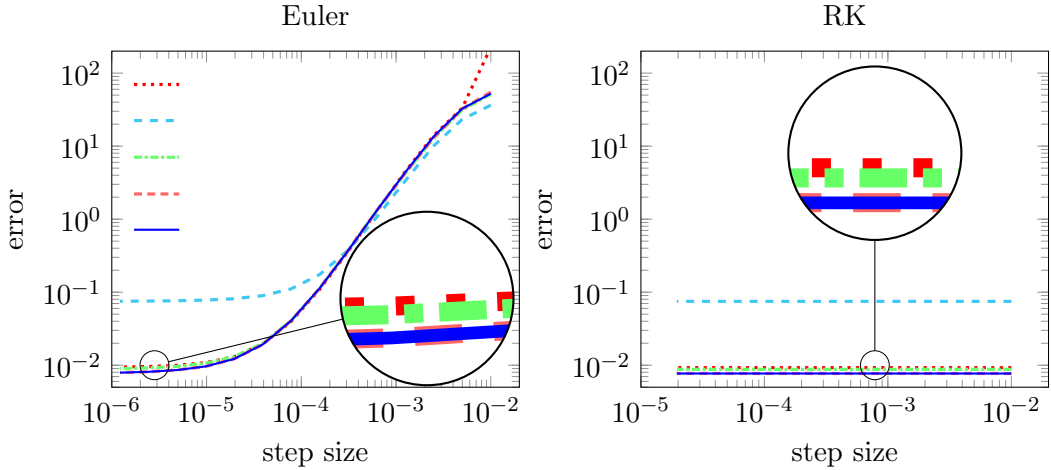
\begin{figure}[htbp]
\centering
\begin{tikzpicture}[spy using outlines=
  {circle, magnification=6, connect spies}]
\tikzstyle{every node}=[]
\begin{axis}[width=7cm,
xmax=0.02,xmin=0.000001,
ymax=200, ymin = 0.005,
ymode = log,
xmode = log,
xlabel={step size},ylabel={error},
title = {Euler},
ylabel near ticks,
legend entries={conventional,IRLS(1),IRLS(2),IRLS(3),IRLS},
legend style={legend cell align=left,draw=none,fill=white,fill opacity=0,text opacity=1,legend pos=north west,},
	]
\addplot[very thick, color=red, dotted, 
filter discard warning=false, unbounded coords=discard
] table {
0.01 226.3 
0.005 33.41 
0.0025 14.01 
0.00125 4.491 
0.000625 1.285 
0.0003125 0.3625 
0.0001563 0.1095 
7.813e-05 0.03963 
3.906e-05 0.01943 
1.953e-05 0.01309 
9.766e-06 0.01086 
4.883e-06 0.009983 
2.441e-06 0.009605 
1.221e-06 0.009431 
};
\addplot[very thick, dashed, color=cyan!60,
filter discard warning=false, unbounded coords=discard
] table {
0.01 36.41 
0.005 23.29 
0.0025 10.12 
0.00125 3.354 
0.000625 1.064 
0.0003125 0.3774 
0.0001563 0.1745 
7.813e-05 0.1112 
3.906e-05 0.08964 
1.953e-05 0.08141 
9.766e-06 0.07793 
4.883e-06 0.07635 
2.441e-06 0.0756 
1.221e-06 0.07523 
};
\addplot[very thick, densely dashdotted, color=green!60,
filter discard warning=false, unbounded coords=discard
] table {
0.01 51.51 
0.005 31.3 
0.0025 12.87 
0.00125 4.237 
0.000625 1.306 
0.0003125 0.377 
0.0001563 0.113 
7.813e-05 0.04084 
3.906e-05 0.01985 
1.953e-05 0.01308 
9.766e-06 0.01063 
4.883e-06 0.009628 
2.441e-06 0.009187 
1.221e-06 0.00898 
};
\addplot[very thick, densely dashed, color=red!60,
filter discard warning=false, unbounded coords=discard
] table {
0.01 55.06 
0.005 32.45 
0.0025 13.15 
0.00125 4.281 
0.000625 1.307 
0.0003125 0.3732 
0.0001563 0.1132 
7.813e-05 0.04067 
3.906e-05 0.01918 
1.953e-05 0.01218 
9.766e-06 0.009633 
4.883e-06 0.008594 
2.441e-06 0.008134 
1.221e-06 0.007919 
};
\addplot[thick, color=blue,
filter discard warning=false, unbounded coords=discard
] table {
0.01 52.28 
0.005 32.67 
0.0025 13.6 
0.00125 4.38 
0.000625 1.307 
0.0003125 0.3725 
0.0001563 0.1133 
7.813e-05 0.04073 
3.906e-05 0.0192 
1.953e-05 0.01219 
9.766e-06 0.00963 
4.883e-06 0.008588 
2.441e-06 0.008127 
1.221e-06 0.007911 
};
\coordinate (spypoint) at (axis cs: 0.000002812,0.01);
\coordinate (magnifyglass) at (axis cs: 0.002112, 0.082);
\end{axis}
\spy [black, size=2.3cm] on (spypoint)
  in node[fill=white] at (magnifyglass);
\end{tikzpicture} 
\begin{tikzpicture}[spy using outlines=
  {circle, magnification=6, connect spies}]
\tikzstyle{every node}=[]
\begin{axis}[width=7cm,
xmax=0.02,xmin=0.00001,
ymax=200, ymin = 0.005,
ymode = log,
xmode = log,
xlabel={step size},ylabel={error},
title = {RK},
ylabel near ticks,
	]
\addplot[very thick, color=red, dotted, 
filter discard warning=false, unbounded coords=discard
] table {
0.01 0.009255 
0.005 0.009266 
0.0025 0.009266 
0.00125 0.009266 
0.000625 0.009266 
0.0003125 0.009266 
0.0001563 0.009266 
7.813e-05 0.009266 
3.906e-05 0.009266 
1.953e-05 0.009266 
};
\addplot[very thick, dashed, color=cyan!60,
filter discard warning=false, unbounded coords=discard
] table {
0.01 0.07484 
0.005 0.07487 
0.0025 0.07487 
0.00125 0.07487 
0.000625 0.07487 
0.0003125 0.07487 
0.0001563 0.07487 
7.813e-05 0.07487 
3.906e-05 0.07487 
1.953e-05 0.07487 
};
\addplot[very thick, densely dashdotted, color=green!60,
filter discard warning=false, unbounded coords=discard
] table {
0.01 0.008774 
0.005 0.008784 
0.0025 0.008784 
0.00125 0.008784 
0.000625 0.008784 
0.0003125 0.008784 
0.0001563 0.008784 
7.813e-05 0.008784 
3.906e-05 0.008784 
1.953e-05 0.008784 
};
\addplot[very thick, densely dashed, color=red!60,
filter discard warning=false, unbounded coords=discard
] table {
0.01 0.007704 
0.005 0.007714 
0.0025 0.007714 
0.00125 0.007714 
0.000625 0.007714 
0.0003125 0.007714 
0.0001563 0.007714 
7.813e-05 0.007714 
3.906e-05 0.007714 
1.953e-05 0.007714 
};
\addplot[thick, color=blue,
filter discard warning=false, unbounded coords=discard
] table {
0.01 0.007695 
0.005 0.007705 
0.0025 0.007705 
0.00125 0.007705 
0.000625 0.007705 
0.0003125 0.007705 
0.0001563 0.007705 
7.813e-05 0.007705 
3.906e-05 0.007705 
1.953e-05 0.007705 
};
\coordinate (spypoint) at (axis cs: 0.0007812,0.01);
\coordinate (magnifyglass) at (axis cs: 0.0007812, 8);
\end{axis}
\spy [black, size=2.3cm] on (spypoint)
  in node[fill=white] at (magnifyglass);
\end{tikzpicture} 

\caption{Errors $l(\hat{\theta},\theta)$ in \eqref{Lorenz_err} of IRLS (Algorithm~\ref{alg:irls1}), IRLS($L$) (Algorithm~\ref{alg:irlsL}) with several values of $L$, and the conventional method in \eqref{ml_form_approx} for the Lorenz system.  
}
\label{fig:Lorenz_comparison1}
\end{figure}

\cref{fig:Lorenz_estimation2} plots the estimated weights $w_{k,j}^{(20)}$ at the 20th iteration for the Euler methods with $\Delta t = 5.0 \times 10^{-3}$ and $\Delta t = 1.0 \times 10^{-3}$ and RK method with $\Delta t = 5.0 \times 10^{-3}$.
The result for the Heun method was almost the same with the RK method.
From \cref{order_w}, the weight $w_{k,j}$ does not exceed $1/\gamma_j^2$, which is \revise{2} for $j=1$ and \revise{10} for $j=2,3$.
\revise{For the RK method, the weight $w_{k,j}$ is close to these upper bounds at every $k$, which implies that the numerical solution is sufficiently accurate compared to the observation noise variance over the observation period.
On the other hand, the weight is significantly smaller than the upper bounds for the Euler methods, especially when the step size is small.}
In this way, the IRLS algorithms provide information about the reliablity of the numerical solution as a byproduct.

\begin{figure}[htbp]
\centering
\begin{tabular}{ll}
\
\begin{tikzpicture}
\tikzstyle{every node}=[]
\begin{axis}[width=7cm,
xmax=2,xmin=0,
ymax=2.1, ymin = 0,
xlabel={$t$},ylabel={weight for $x_1$},
ylabel near ticks,
legend entries={Euler ($\Delta t = 0.005$),Euler ($\Delta t = 0.001$),RK ($\Delta t = 0.005$)},
legend style={legend cell align=left,draw=none,fill=white,fill opacity=0.8,text opacity=1,legend pos=south east,},
	]
\addplot[thick, dotted, color=red,
filter discard warning=false, unbounded coords=discard
] table {
  0.00   2.00000000
  0.19   2.00000000
  0.20   1.32957473
  1.88   1.32957473
  1.89   1.22071978
  1.97   1.22071978
  1.98   1.12895459
  1.99   0.88587109
  2.00   0.88587109
};
\addplot[thick, densely dashed, color=blue,
filter discard warning=false, unbounded coords=discard
] table {
  0.00   2.00000000
  0.19   2.00000000
  0.20   1.95622504
  1.89   1.95622504
  1.90   1.80686730
  2.00   1.80686730
};
\addplot[thick, color=magenta,
filter discard warning=false, unbounded coords=discard
] table {
  0.00   2.00000000
  1.90   2.00000000
  1.91   1.74324223
  2.00   1.74324223
};
\end{axis}
\end{tikzpicture} 
& 
\begin{tikzpicture}
\tikzstyle{every node}=[]
\begin{axis}[width=7cm,
xmax=2,xmin=0,
ymax=11,ymin=0,
xlabel={$t$},ylabel={weight for $x_2$},
ylabel near ticks,
	]
\addplot[thick, dotted, color=red,
filter discard warning=false, unbounded coords=discard
] table {
  0.00  10.00000000
  0.01   9.14049272
  0.02   5.45929889
  0.03   1.88542123
  1.84   1.88542123
  1.85   0.92368629
  1.86   0.86337900
  1.87   0.73390796
  1.94   0.73390796
  1.95   0.55785239
  1.96   0.55785239
  1.97   0.37787499
  1.98   0.34654434
  1.99   0.34557729
  2.00   0.25074819
};
\addplot[thick, densely dashed, color=blue,
filter discard warning=false, unbounded coords=discard
] table {
  0.00  10.00000000
  0.09  10.00000000
  0.10   8.57071219
  1.84   8.57071219
  1.85   5.94868262
  1.86   5.92937586
  1.96   5.92937586
  1.97   5.69450694
  1.99   5.69450694
  2.00   2.64860956
};
\addplot[thick, color=magenta,
filter discard warning=false, unbounded coords=discard
] table {
  0.00  10.00000000
  1.72  10.00000000
  1.73   9.81306863
  2.00   9.81306863
};
\end{axis}
\end{tikzpicture} 
\\
\begin{tikzpicture}
\tikzstyle{every node}=[]
\begin{axis}[width=7cm,
xmax=2,xmin=0,
ymax=11,ymin=0,
xlabel={$t$},ylabel={weight for $x_3$},
ylabel near ticks,
	]
\addplot[thick, dotted, color=red,
filter discard warning=false, unbounded coords=discard
] table {
  0.00   5.20262528
  0.35   5.20262528
  0.36   2.32543852
  0.56   2.32543852
  0.57   0.97159900
  2.00   0.97159900
};
\addplot[thick, densely dashed, color=blue,
filter discard warning=false, unbounded coords=discard
] table {
  0.00   7.98700898
  2.00   7.98700898
};
\addplot[thick, color=magenta,
filter discard warning=false, unbounded coords=discard
] table {
  0.00   9.59958438
  1.98   9.59958438
  1.99   8.41914143
  2.00   8.41914143

};
\end{axis}
\end{tikzpicture} 
&
\end{tabular}

\caption{
Estimated weights $w_{k,j}^{(20)}$ by IRLS(20) (Algorithm~\ref{alg:irlsL}) for the Lorenz system.
}
\label{fig:Lorenz_estimation2}
\end{figure}
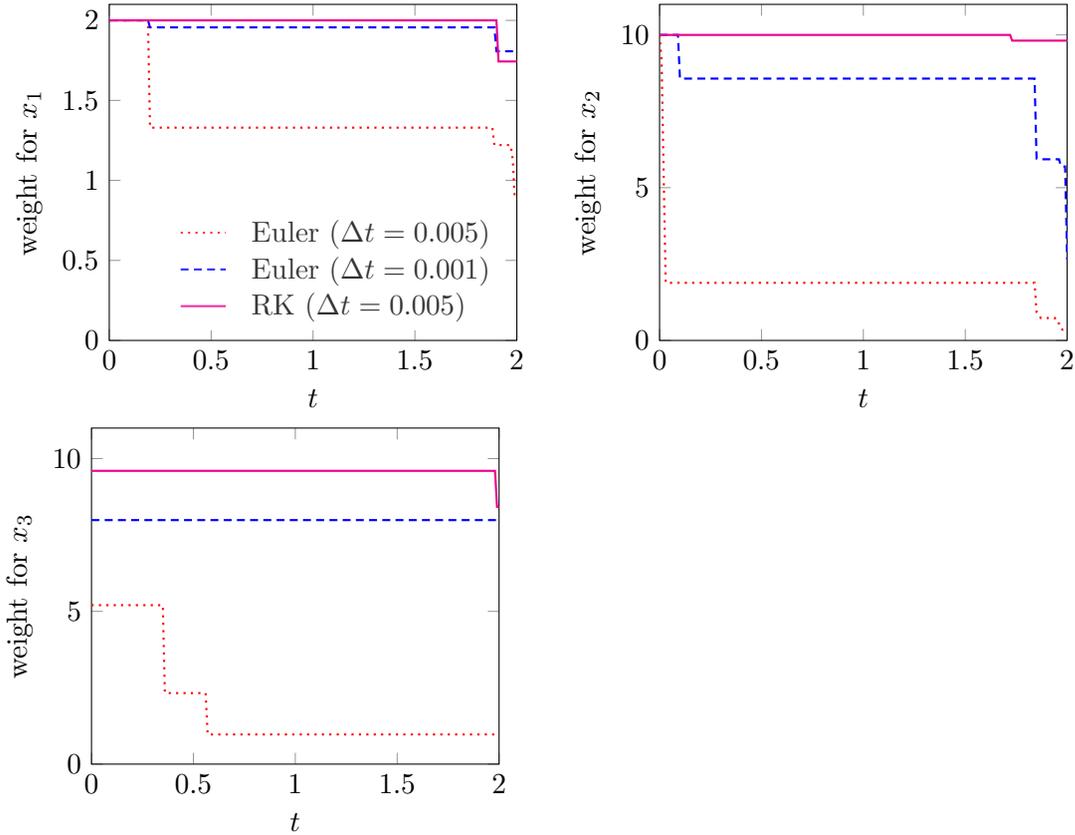

\revisee{
As discussed in \cref{rem:conrad}, we can consider another version of the IRLS algorithm that employs the probabilistic ODE solver \cite{cg17} instead of PAVA for updating the weights.
Whereas this method does not require the monotonicity condition \cref{assmp:sigma} on the discretization error variance, it takes much more computational cost than PAVA.
\cref{fig:Lorenz_comparison_conrad} compares estimation accuracy of these two types of IRLS algorithms\footnote{Since the IRLS algorithm with probabilistic ODE solver also converged within a few iterations, we show the performance of IRLS(3) for both types.} under the same experimental setting with \cref{fig:Lorenz_comparison1}.
For the probabilistic ODE solver, we used 100 samples and determined the perturbation variance following Section 3.1 of \cite{cg17}\footnote{
We estimated the local error of the numerical integrator by comparing its solution with that obtained by the same integrator with the halved step size. We used 100 samples to obtain a Gaussian approximation of the random solver and 50 samples for Monte Carlo sampling.}.
The IRLS algorithm with PAVA attains comparable estimation accuracy to that with the probabilistic ODE solver. 
We will also compare their performance on discretization error quantification in \cref{subsubsec:Lorenz_uq}.
}

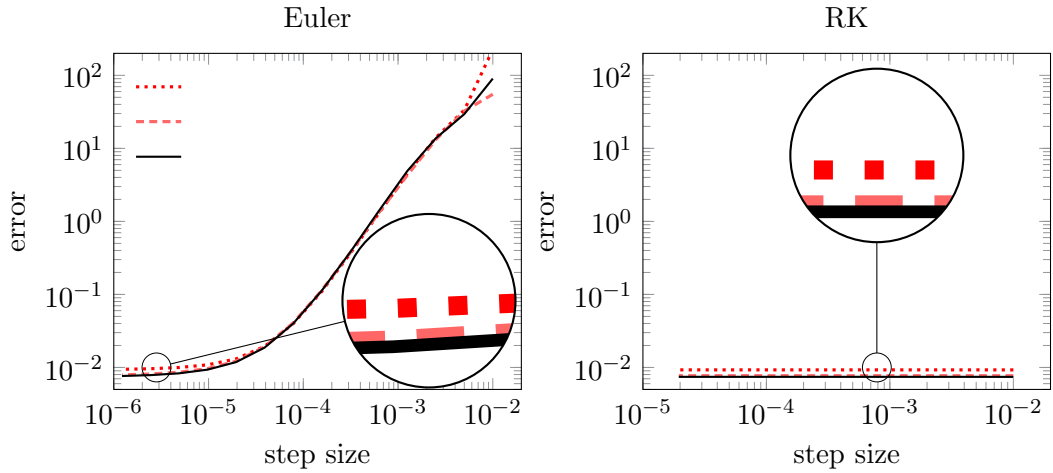
\begin{figure}[htbp]
\centering
\begin{tikzpicture}[spy using outlines=
  {circle, magnification=6, connect spies}]
\tikzstyle{every node}=[]
\begin{axis}[width=7cm,
xmax=0.02,xmin=0.000001,
ymax=200, ymin = 0.005,
ymode = log,
xmode = log,
xlabel={step size},ylabel={error},
title = {Euler},
ylabel near ticks,
legend entries={conventional,PAVA,probabilistic},
legend style={legend cell align=left,draw=none,fill=white,fill opacity=0,text opacity=1,legend pos=north west,},
	]
\addplot[very thick, color=red, dotted, 
filter discard warning=false, unbounded coords=discard
] table {
0.01 226.3 
0.005 33.41 
0.0025 14.01 
0.00125 4.491 
0.000625 1.285 
0.0003125 0.3625 
0.0001563 0.1095 
7.813e-05 0.03963 
3.906e-05 0.01943 
1.953e-05 0.01309 
9.766e-06 0.01086 
4.883e-06 0.009983 
2.441e-06 0.009605 
1.221e-06 0.009431 
};
\addplot[very thick, densely dashed, color=red!60,
filter discard warning=false, unbounded coords=discard
] table {
	0.01 55.06 
	0.005 32.45 
	0.0025 13.15 
	0.00125 4.281 
	0.000625 1.307 
	0.0003125 0.3732 
	0.0001563 0.1132 
	7.813e-05 0.04067 
	3.906e-05 0.01918 
	1.953e-05 0.01218 
	9.766e-06 0.009633 
	4.883e-06 0.008594 
	2.441e-06 0.008134 
	1.221e-06 0.007919 
};
\addplot[thick, color=black,
filter discard warning=false, unbounded coords=discard
] table {
0.01 90.27 
0.005 29.46 
0.0025 13.73 
0.00125 4.889 
0.000625 1.382 
0.0003125 0.3781 
0.0001563 0.1123 
7.813e-05 0.03986 
3.906e-05 0.01871 
1.953e-05 0.01184 
9.766e-06 0.009336 
4.883e-06 0.008308 
2.441e-06 0.007853 
1.221e-06 0.007641 
};
\coordinate (spypoint) at (axis cs: 0.000002812,0.01);
\coordinate (magnifyglass) at (axis cs: 0.002112, 0.082);
\end{axis}
\spy [black, size=2.3cm] on (spypoint)
  in node[fill=white] at (magnifyglass);
\end{tikzpicture} 
\begin{tikzpicture}[spy using outlines=
{circle, magnification=6, connect spies}]
\tikzstyle{every node}=[]
\begin{axis}[width=7cm,
xmax=0.02,xmin=0.00001,
ymax=200, ymin = 0.005,
ymode = log,
xmode = log,
xlabel={step size},ylabel={error},
title = {RK},
ylabel near ticks,
]
\addplot[very thick, color=red, dotted, 
filter discard warning=false, unbounded coords=discard
] table {
	0.01 0.009255 
	0.005 0.009266 
	0.0025 0.009266 
	0.00125 0.009266 
	0.000625 0.009266 
	0.0003125 0.009266 
	0.0001563 0.009266 
	7.813e-05 0.009266 
	3.906e-05 0.009266 
	1.953e-05 0.009266 
};
\addplot[very thick, densely dashed, color=red!60,
filter discard warning=false, unbounded coords=discard
] table {
	0.01 0.007704 
	0.005 0.007714 
	0.0025 0.007714 
	0.00125 0.007714 
	0.000625 0.007714 
	0.0003125 0.007714 
	0.0001563 0.007714 
	7.813e-05 0.007714 
	3.906e-05 0.007714 
	1.953e-05 0.007714 
};
\addplot[thick, color=black,
filter discard warning=false, unbounded coords=discard
] table {
0.01 0.007428 
0.005 0.007438 
0.0025 0.007438 
0.00125 0.007438 
0.000625 0.007438 
0.0003125 0.007438 
0.0001563 0.007438 
7.813e-05 0.007438 
3.906e-05 0.007438 
1.953e-05 0.007438 
};
\coordinate (spypoint) at (axis cs: 0.0007812,0.01);
\coordinate (magnifyglass) at (axis cs: 0.0007812, 8);
\end{axis}
\spy [black, size=2.3cm] on (spypoint)
in node[fill=white] at (magnifyglass);
\end{tikzpicture} 

\caption{Errors $l(\hat{\theta},\theta)$ in \cref{Lorenz_err} of IRLS(3) (\cref{alg:irlsL}) with two types of weight updates and the conventional method in \cref{ml_form_approx} for the Lorenz system.  
}
\label{fig:Lorenz_comparison_conrad}
\end{figure}

\revisee{\cref{tab_conf} presents the coverage probability and average length of 95 \% confidence intervals obtained by the method in \cref{subsec:confidence}.
The experimental setting is the same with \cref{fig:Lorenz_estimation1} (RK, $\Delta t = 5.0 \times 10^{-3}$) and we used 100 repetitions.
The bisection method was used with the absolute error tolerance $0.01$.
For all estimands, the coverage probability is approximately equal to the nominal value, and the average length is in the same order of magnitude with the square root of the mean squared error.
Therefore, the method in \cref{subsec:confidence} provides proper uncertainty quantification of the point estimates obtained by the IRLS algorithm.
Although we tried to compute confidence intervals for the conventional method \cref{ml_form_approx} in a similar way, the objective function was rapidly varying and the computation was unstable.
The IRLS algorithm reduces such instability through discretization error quantification.
We also conducted simulation with Euler ($\Delta t = 1.0 \times 10^{-3}$).
However, the coverage probability was very small due to the bias of the point estimates obtained by the IRLS algorithm.
It is an interesting future problem to derive correction methods for such cases, e.g. with a bootstrap procedure. 
Note that the estimated weights by the IRLS algorithm (\cref{fig:Lorenz_estimation2}) already indicated that Euler ($\Delta t = 1.0 \times 10^{-3}$) is not sufficiently accurate in this case.
}
	\begin{table}
	\label{tab_conf}
	\caption{Coverage probability and average length of 95 \% confidence intervals and mean squared errors of IRLS (\cref{alg:irls1}) for the Lorenz system.}
    \centering
	\begin{tabular}{|c|c|c|c|}
		\hline
		estimand & coverage probability & average length & MSE \\ \hline
		$x_1(0)$ & 83 & 0.5976 & $4.22\times 10^{-2}$ \\
		$x_2(0)$ & 88 & 0.3291 & $1.23\times 10^{-2}$ \\
		$x_3(0)$ & 86 & 0.1799 & $3.58\times 10^{-3}$ \\
		$\sigma$ & 84 & 0.1891 & $5.00\times 10^{-3}$ \\
		$\rho$ & 82 & 0.0636 & $6.51\times 10^{-4}$ \\
		$\beta$ & 87 & 0.0149 & $2.68\times 10^{-5}$ \\ \hline
    \end{tabular}
	\end{table}

\subsubsection{Discretization error quantification}
\label{subsubsec:Lorenz_uq}
Now, we confirm that the IRLS algorithms successfully quantify the discretization error.
Here, we regard the output of \textsf{ode45} as $x(t; \theta)$ and fix the time step size for $\tx(t_k; \theta)$ to \revise{$\Delta t = 5.0\times 10^{-3}$}. 
\cref{fig:Lorenz_variables} shows the observation data $y_1,\dots,y_K$ and the numerical solutions $\tx(t; \theta)$ with the true initial state \revise{and true system parameter} of three ODE solvers.
\revise{The discretization error of the Euler method becomes significantly large around $t=1$. 
On the other hand, the numerical solutions with the Heun and RK methods are almost identical and seem to be sufficiently accurate compared to the observation noise variances.
}

\input{Lorenz_variables_revise_h0.005.tex}

\cref{fig:Lorenz_variables_error} plots the estimated weights $\hat{w}_{k,j}$, the square root of the estimated discretization error variance $\hat{\sigma}_{k,j}$, and the actual discretization error $|x_j(t_k; \theta) - \tilde{x}_{j}(t_k; {\theta})|$ \revise{for the Euler method}.
Note that we used the true initial state {and true system parameter} here.
The estimates of discretization error variance quantify the actual discretization error well.
\revise{In particular, the behavior of the rapid growth is well captured.}
\revise{The estimated weights are very small for $t>1$, which indicates that the numerical solutions are no longer reliable.
	On the other hand, the estimated weights for the RK method are also shown in \cref{fig:Lorenz_variables_error}.
	Note that the estimated weights for the Heun method are almost identical to the RK method. 
	The estimated weights for the RK method are almost constant and close to $1/\gamma_j^2$, which implies that the numerical solutions are sufficiently accurate compared to the observation noise variances.
	Similar results are obtained also for the Euler method if the step size is sufficiently small, although the results are not displayed here to save space.
}
In this way, the IRLS algorithms provide information about the accuracy of the numerical solution as a byproduct.

\input{Lorenz_variables_error_revise_h0.005.tex}

\revisee{As discussed in \cref{rem:conrad}, we can consider another version of the IRLS algorithm that employs the probabilistic ODE solver \cite{cg17} instead of PAVA for updating the weights.
\cref{fig:Lorenz_variables_error_conrad} shows discretization error quantification by this IRLS algorithm in the same setting with \cref{fig:Lorenz_variables_error}.
Again, the estimates of discretization error variance quantify the scale of the actual discretization error well.
Note that the probabilistic ODE solver does not require the monotonicity condition \cref{assmp:sigma} on the discretization error variance.
}

\input{Lorenz_variables_error_revise_h0.005_conrad.tex}

\subsubsection{Unknown observation noise variances}\label{sec:unknown}
\revise{
In practice, the observation noise variances $\gamma_j^2$ may be unknown.
As discussed in \cref{rem_unknown}, the proposed method works even in such cases by using lower bounds $\tilde{\gamma}_j^2$ instead of $\gamma_j^2$.
\cref{fig:Lorenz_unknown} shows the results of the proposed method, where the lower bound $\tilde{\Gamma} = \diag ( 0.001,0.001,0.001)$ is used instead of the true observation noise variances $\Gamma = \diag ( 0.5, 0.1, 0.1)$ and the RK method with step size $\Delta t = 5.0 \times 10^{-3}$ is employed.
The estimation accuracy is almost the same with the case of using the true observation noise variances shown in
\cref{fig:Lorenz_estimation1}. 
Also, the estimated weights around $t=0$ are close to $1/\gamma_j^2$, which implies that rough estimates of $\gamma_j^2$ are also obtained.
}

\input{Lorenz_unknown_weight.tex}

\subsubsection{Estimation from limited observation}
\label{subsubsec:Lorenz_limited}
\revise{We consider a more challenging case where both the observation period and observed variables are limited.
Specifically, we restrict the observation data in \cref{fig:Lorenz_variables} to only the first and second variables in the period $[0.1,2]$.
\cref{fig:Lorenz_period_variable_restriction} shows the results of the proposed method, where the RK method with step size $\Delta t = 5.0 \times 10^{-3}$ is employed.
The proposed method still converges in a few iterations, and the estimated weights show a qualitatively similar behavior to \cref{fig:Lorenz_estimation2}.
Thus, the proposed method works well even when only limited observation is available.
}

\begin{figure}[htbp]
\centering
\begin{tikzpicture}
\tikzstyle{every node}=[]
\begin{axis}[width=7cm,
xmax=20,xmin=0,
ymode = log,
xlabel={iteration count},ylabel={error},
ylabel near ticks,
	]
\addplot[thick, 
filter discard warning=false, unbounded coords=discard
] table {
0 277.74757823
1   0.22760019
2   0.01373339
3   0.01335074
4   0.01335063
5   0.01335063
6   0.01335063
7   0.01335063
8   0.01335063
9   0.01335063
10   0.01335063
11   0.01335063
12   0.01335063
13   0.01335063
14   0.01335063
15   0.01335063
16   0.01335063
17   0.01335063
18   0.01335063
19   0.01335063
20   0.01335063
};
\end{axis}
\end{tikzpicture} 
\begin{tikzpicture}
\tikzstyle{every node}=[]
\begin{axis}[width=7cm,
xmax=2,xmin=0,
ymax=10.5, ymin = 0,
xlabel={$t$},ylabel={weight},
ylabel near ticks,
legend entries={$x_1$,$x_2$,$x_3$},
legend style={legend cell align=left,draw=none,fill=white,fill opacity=0.8,text opacity=1,at={(0.3,0.4)},anchor=east},
	]
\addplot[thick, 
filter discard warning=false, unbounded coords=discard
] table {
  0.10   2.00000000
  0.11   2.00000000
  0.12   2.00000000
  0.13   2.00000000
  0.14   2.00000000
  0.15   2.00000000
  0.16   2.00000000
  0.17   2.00000000
  0.18   2.00000000
  0.19   2.00000000
  0.20   2.00000000
  0.21   2.00000000
  0.22   2.00000000
  0.23   2.00000000
  0.24   2.00000000
  0.25   2.00000000
  0.26   2.00000000
  0.27   2.00000000
  0.28   2.00000000
  0.29   2.00000000
  0.30   2.00000000
  0.31   2.00000000
  0.32   2.00000000
  0.33   2.00000000
  0.34   2.00000000
  0.35   2.00000000
  0.36   2.00000000
  0.37   2.00000000
  0.38   2.00000000
  0.39   2.00000000
  0.40   2.00000000
  0.41   2.00000000
  0.42   2.00000000
  0.43   2.00000000
  0.44   2.00000000
  0.45   2.00000000
  0.46   2.00000000
  0.47   2.00000000
  0.48   2.00000000
  0.49   2.00000000
  0.50   2.00000000
  0.51   2.00000000
  0.52   2.00000000
  0.53   2.00000000
  0.54   2.00000000
  0.55   2.00000000
  0.56   2.00000000
  0.57   2.00000000
  0.58   2.00000000
  0.59   2.00000000
  0.60   2.00000000
  0.61   2.00000000
  0.62   2.00000000
  0.63   2.00000000
  0.64   2.00000000
  0.65   2.00000000
  0.66   2.00000000
  0.67   2.00000000
  0.68   2.00000000
  0.69   2.00000000
  0.70   2.00000000
  0.71   2.00000000
  0.72   2.00000000
  0.73   2.00000000
  0.74   2.00000000
  0.75   2.00000000
  0.76   2.00000000
  0.77   2.00000000
  0.78   2.00000000
  0.79   2.00000000
  0.80   2.00000000
  0.81   2.00000000
  0.82   2.00000000
  0.83   2.00000000
  0.84   2.00000000
  0.85   2.00000000
  0.86   2.00000000
  0.87   2.00000000
  0.88   2.00000000
  0.89   2.00000000
  0.90   2.00000000
  0.91   2.00000000
  0.92   2.00000000
  0.93   2.00000000
  0.94   2.00000000
  0.95   2.00000000
  0.96   2.00000000
  0.97   2.00000000
  0.98   2.00000000
  0.99   2.00000000
  1.00   2.00000000
  1.01   2.00000000
  1.02   2.00000000
  1.03   2.00000000
  1.04   2.00000000
  1.05   2.00000000
  1.06   2.00000000
  1.07   2.00000000
  1.08   2.00000000
  1.09   2.00000000
  1.10   2.00000000
  1.11   2.00000000
  1.12   2.00000000
  1.13   2.00000000
  1.14   2.00000000
  1.15   2.00000000
  1.16   2.00000000
  1.17   2.00000000
  1.18   2.00000000
  1.19   2.00000000
  1.20   2.00000000
  1.21   2.00000000
  1.22   2.00000000
  1.23   2.00000000
  1.24   2.00000000
  1.25   2.00000000
  1.26   2.00000000
  1.27   2.00000000
  1.28   2.00000000
  1.29   2.00000000
  1.30   2.00000000
  1.31   2.00000000
  1.32   2.00000000
  1.33   2.00000000
  1.34   2.00000000
  1.35   2.00000000
  1.36   2.00000000
  1.37   2.00000000
  1.38   2.00000000
  1.39   2.00000000
  1.40   2.00000000
  1.41   2.00000000
  1.42   2.00000000
  1.43   2.00000000
  1.44   2.00000000
  1.45   2.00000000
  1.46   2.00000000
  1.47   2.00000000
  1.48   2.00000000
  1.49   2.00000000
  1.50   2.00000000
  1.51   2.00000000
  1.52   2.00000000
  1.53   2.00000000
  1.54   2.00000000
  1.55   2.00000000
  1.56   2.00000000
  1.57   2.00000000
  1.58   2.00000000
  1.59   2.00000000
  1.60   2.00000000
  1.61   2.00000000
  1.62   2.00000000
  1.63   2.00000000
  1.64   2.00000000
  1.65   2.00000000
  1.66   2.00000000
  1.67   2.00000000
  1.68   2.00000000
  1.69   2.00000000
  1.70   2.00000000
  1.71   2.00000000
  1.72   2.00000000
  1.73   2.00000000
  1.74   2.00000000
  1.75   2.00000000
  1.76   2.00000000
  1.77   2.00000000
  1.78   2.00000000
  1.79   2.00000000
  1.80   2.00000000
  1.81   2.00000000
  1.82   2.00000000
  1.83   2.00000000
  1.84   2.00000000
  1.85   2.00000000
  1.86   2.00000000
  1.87   2.00000000
  1.88   2.00000000
  1.89   2.00000000
  1.90   2.00000000
  1.91   1.73995429
  1.92   1.73995429
  1.93   1.73995429
  1.94   1.73995429
  1.95   1.73995429
  1.96   1.73995429
  1.97   1.73995429
  1.98   1.73995429
  1.99   1.73995429
  2.00   1.73995429
};
\addplot[thick, densely dashed,
filter discard warning=false, unbounded coords=discard
] table {
  0.10  10.00000000
  0.11  10.00000000
  0.12  10.00000000
  0.13  10.00000000
  0.14  10.00000000
  0.15  10.00000000
  0.16  10.00000000
  0.17  10.00000000
  0.18  10.00000000
  0.19  10.00000000
  0.20  10.00000000
  0.21  10.00000000
  0.22  10.00000000
  0.23  10.00000000
  0.24  10.00000000
  0.25  10.00000000
  0.26  10.00000000
  0.27  10.00000000
  0.28  10.00000000
  0.29  10.00000000
  0.30  10.00000000
  0.31  10.00000000
  0.32  10.00000000
  0.33  10.00000000
  0.34  10.00000000
  0.35  10.00000000
  0.36  10.00000000
  0.37  10.00000000
  0.38  10.00000000
  0.39  10.00000000
  0.40  10.00000000
  0.41  10.00000000
  0.42  10.00000000
  0.43  10.00000000
  0.44  10.00000000
  0.45  10.00000000
  0.46  10.00000000
  0.47  10.00000000
  0.48  10.00000000
  0.49  10.00000000
  0.50  10.00000000
  0.51  10.00000000
  0.52  10.00000000
  0.53  10.00000000
  0.54  10.00000000
  0.55  10.00000000
  0.56  10.00000000
  0.57  10.00000000
  0.58  10.00000000
  0.59  10.00000000
  0.60  10.00000000
  0.61  10.00000000
  0.62  10.00000000
  0.63  10.00000000
  0.64  10.00000000
  0.65  10.00000000
  0.66  10.00000000
  0.67  10.00000000
  0.68  10.00000000
  0.69  10.00000000
  0.70  10.00000000
  0.71  10.00000000
  0.72  10.00000000
  0.73  10.00000000
  0.74  10.00000000
  0.75  10.00000000
  0.76  10.00000000
  0.77  10.00000000
  0.78  10.00000000
  0.79  10.00000000
  0.80  10.00000000
  0.81  10.00000000
  0.82  10.00000000
  0.83  10.00000000
  0.84  10.00000000
  0.85  10.00000000
  0.86  10.00000000
  0.87  10.00000000
  0.88  10.00000000
  0.89  10.00000000
  0.90  10.00000000
  0.91  10.00000000
  0.92  10.00000000
  0.93  10.00000000
  0.94  10.00000000
  0.95  10.00000000
  0.96  10.00000000
  0.97  10.00000000
  0.98  10.00000000
  0.99  10.00000000
  1.00  10.00000000
  1.01  10.00000000
  1.02  10.00000000
  1.03  10.00000000
  1.04  10.00000000
  1.05  10.00000000
  1.06  10.00000000
  1.07  10.00000000
  1.08  10.00000000
  1.09  10.00000000
  1.10  10.00000000
  1.11  10.00000000
  1.12  10.00000000
  1.13  10.00000000
  1.14  10.00000000
  1.15  10.00000000
  1.16  10.00000000
  1.17  10.00000000
  1.18  10.00000000
  1.19  10.00000000
  1.20  10.00000000
  1.21  10.00000000
  1.22  10.00000000
  1.23  10.00000000
  1.24  10.00000000
  1.25  10.00000000
  1.26  10.00000000
  1.27  10.00000000
  1.28  10.00000000
  1.29  10.00000000
  1.30  10.00000000
  1.31  10.00000000
  1.32  10.00000000
  1.33  10.00000000
  1.34  10.00000000
  1.35  10.00000000
  1.36  10.00000000
  1.37  10.00000000
  1.38  10.00000000
  1.39  10.00000000
  1.40  10.00000000
  1.41  10.00000000
  1.42  10.00000000
  1.43  10.00000000
  1.44  10.00000000
  1.45  10.00000000
  1.46  10.00000000
  1.47  10.00000000
  1.48  10.00000000
  1.49  10.00000000
  1.50  10.00000000
  1.51  10.00000000
  1.52  10.00000000
  1.53  10.00000000
  1.54  10.00000000
  1.55  10.00000000
  1.56  10.00000000
  1.57  10.00000000
  1.58  10.00000000
  1.59  10.00000000
  1.60  10.00000000
  1.61  10.00000000
  1.62  10.00000000
  1.63  10.00000000
  1.64  10.00000000
  1.65  10.00000000
  1.66  10.00000000
  1.67  10.00000000
  1.68  10.00000000
  1.69  10.00000000
  1.70  10.00000000
  1.71  10.00000000
  1.72  10.00000000
  1.73   9.82874623
  1.74   9.82874623
  1.75   9.82874623
  1.76   9.82874623
  1.77   9.82874623
  1.78   9.82874623
  1.79   9.82874623
  1.80   9.82874623
  1.81   9.82874623
  1.82   9.82874623
  1.83   9.82874623
  1.84   9.82874623
  1.85   9.82874623
  1.86   9.82874623
  1.87   9.82874623
  1.88   9.82874623
  1.89   9.82874623
  1.90   9.82874623
  1.91   9.82874623
  1.92   9.82874623
  1.93   9.82874623
  1.94   9.82874623
  1.95   9.82874623
  1.96   9.82874623
  1.97   9.82874623
  1.98   9.82874623
  1.99   9.82874623
  2.00   9.82874623
};
\end{axis}
\end{tikzpicture} 

\caption{
Error in \eqref{Lorenz_err} of IRLS (Algorithm~\ref{alg:irls1}) and estimated weights at the 20th iteration for the Lorenz system, where only the first and second variables are observed in the period $[0.1,2]$ and the RK method with step size $\Delta t = 5.0 \times 10^{-3}$ is employed.
}
\label{fig:Lorenz_period_variable_restriction}
\end{figure}
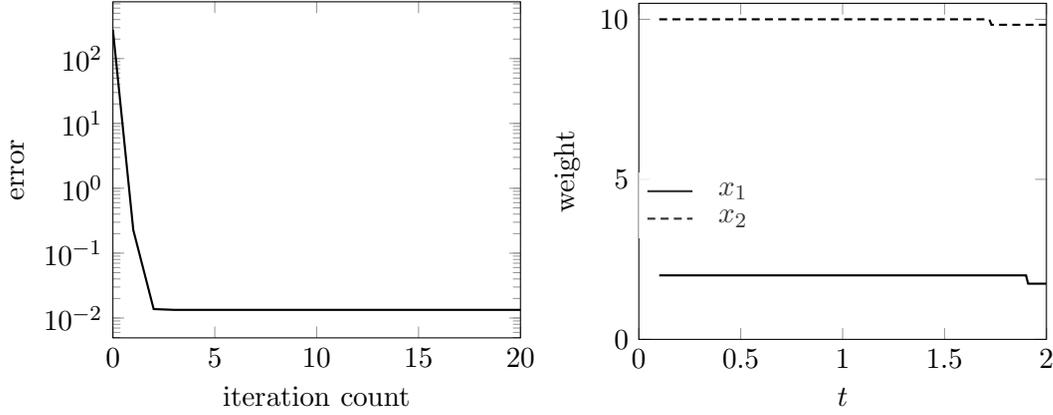

\subsection{FitzHugh--Nagumo model}
Here, we consider the FitzHugh--Nagumo model \cite{f61,nay62}:
\begin{align*}
\frac{\rmd}{\rmd t} \begin{bmatrix}
V \\ R
\end{bmatrix}
=
\begin{bmatrix} 
c\paren*{V-\cfrac{V^3}{3}+R}\\
-\cfrac{1}{c} \paren*{V-a+bR}
\end{bmatrix},
\quad
\begin{bmatrix}
V(0) \\ R(0)
\end{bmatrix}
= \begin{bmatrix}
-1 \\-1
\end{bmatrix}.
\end{align*}

We consider estimation of the system parameter $\theta = [a,b,c] = [0.2,0.2,3]$ from the observation of $V$.
Namely, the observation matrix $H$ in \cref{obs_model2} is $H=[1, 0]$.
The observation is taken at $t_k=(k-1)h$ for $k=1,\dots,K$, where $h=0.2$ and $K=201$, and the observation noise variance is set to $\Gamma = 0.01$.

\cref{fig:FN_estimation1} (top left, top right) plots the objective function $g(\theta^{(l)},w^{(l)})$ in \cref{g_def} and squared error $\| \hat{\theta}^{(l)}-\theta \|^2$ with respect to the iteration count $l$ for different step sizes. 
The initial guess was set to $\theta^{(0)} = [1,1,1]$.
Similarly to the Lorenz system, estimation accuracy is better for smaller step sizes.
Also, \cref{fig:FN_estimation1} (bottom) shows the estimates of weights ${w}_{k,j}^{(20)}$, which indicates that the numerical solution with smaller step size is more reliable.

\input{FN_estimation1.tex}

\cref{fig:FN_comparison1} plots the squared errors of the IRLS algorithms and the conventional method with respect to the step size.
The IRLS algorithms have better estimation accuracy when the step sizes are smaller than $10^{-2}$, whereas both methods attain almost the same estimation accuracy for relatively large step sizes.
Since the plots for IRLS(2), IRLS(3) and IRLS almost overlap each other, IRLS algorithms are considered to converge in two iterations.

\begin{figure}[htbp]
\centering
\begin{tikzpicture}[spy using outlines=
  {circle, magnification=6, connect spies}]
\tikzstyle{every node}=[]
\begin{axis}[width=7cm,
xmax=0.5,xmin=0.0001,
ymax=0.2, ymin = 0.001,
ymode = log,
xmode = log,
xlabel={step size},ylabel={squared error},
ylabel near ticks,
legend entries={conventional,IRLS(1),IRLS(2),IRLS(3),IRLS},
legend style={legend cell align=left,draw=none,fill=white,fill opacity=0.8,text opacity=1,legend pos=north west,},
	]
\addplot[very thick, dotted, color=red,
filter discard warning=false, unbounded coords=discard
] table {
0.2	0.127765085
0.1	0.044602683
0.05	0.016498242
0.025	0.007880639
0.0125	0.00499123
0.00625	0.00390913
0.003125	0.003459543
0.0015625	0.003257686
0.00078125	0.003162498
0.000390625	0.00311634
};
\addplot[very thick, dashed, color=cyan!60,
filter discard warning=false, unbounded coords=discard
] table {
0.2	0.112835862
0.1	4.35E-02
0.05	1.67E-02
0.025	7.10E-03
0.0125	3.64E-03
0.00625	2.33E-03
0.003125	0.001784956
0.0015625	0.001544615
0.00078125	0.001432307
0.000390625	0.001378142
};
\addplot[very thick, densely dashdotted, color=green!60,
filter discard warning=false, unbounded coords=discard
] table {
0.2	0.12774683
0.1	4.43E-02
0.05	1.57E-02
0.025	7.02E-03
0.0125	4.16E-03
0.00625	3.10E-03
0.003125	0.00266218
0.0015625	0.002467294
0.00078125	0.002375741
0.000390625	0.002331438
};
\addplot[very thick, densely dashed, color=red!60,
filter discard warning=false, unbounded coords=discard
] table {
0.2	0.127750379
0.1	4.43E-02
0.05	1.56E-02
0.025	7.02E-03
0.0125	4.18E-03
0.00625	3.12E-03
0.003125	0.002692853
0.0015625	0.002500433
0.00078125	0.002410093
0.000390625	0.002366391
};
\addplot[thick, color=blue,
filter discard warning=false, unbounded coords=discard
] table {
0.2	1.28E-01
0.1	4.43E-02
0.05	1.56E-02
0.025	7.02E-03
0.0125	4.18E-03
0.00625	0.003125399
0.003125	0.002693864
0.0015625	0.002501535
0.00078125	0.00241124
0.000390625	0.002367525
};
\coordinate (spypoint) at (axis cs: 0.0007812,0.002367525);
\coordinate (magnifyglass) at (axis cs: 0.5, 0.01);
\end{axis}
\spy [black, size=2.5cm] on (spypoint)
  in node[fill=white] at (magnifyglass);
\end{tikzpicture}

\caption{Comparison of the squared error $\| \hat{\theta}-\theta \|^2$ of IRLS (Algorithm~\ref{alg:irls1}), IRLS($L$) (Algorithm~\ref{alg:irlsL}) with several values of $L$, and the conventional method in \eqref{ml_form_approx} for the FitzHugh--Nagumo model. 
The plots for IRLS(2), IRLS(3), IRLS almost overlap each other.
}
\label{fig:FN_comparison1}
\end{figure}
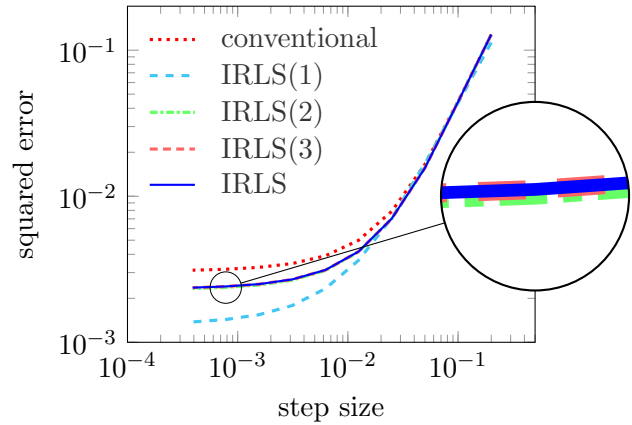

\subsection{Kepler equation}
\label{subsec:Kepler}
Finally, we consider initial value estimation of the Kepler equation:
\begin{align*}
    \frac{\rmd}{\rmd t} 
    \begin{bmatrix}
        q_1 \\ q_2 \\ p_1 \\ p_2
    \end{bmatrix}
    =
    \begin{bmatrix}
        p_1 \\ p_2 \\ -\cfrac{q_1}{(q_1^2+q_2^2)^{3/2}} \\ -\cfrac{q_2}{(q_1^2+q_2^2)^{3/2}}
    \end{bmatrix},
    \quad
    \begin{bmatrix}
        q_1(0) \\ q_2 (0) \\ p_1 (0) \\ p_2 (0)
    \end{bmatrix}
    = \theta.
\end{align*}

We consider \revise{estimation of $\theta = [1-e,0,0,\sqrt{(1+e)/(1-e)}]^{\top}=[0.4,0,0,2]^\top$} with the eccentricity $e=0.6$ from the observations of $q_1$ and $q_2$.
Namely, the observation matrix $H$ in \cref{obs_model2} is 
\begin{align*}
    H=\begin{bmatrix}1 & 0 & 0 & 0 \\ 0 & 1 & 0 & 0 \end{bmatrix}.    
\end{align*}
The observation is taken at $t_k=(k-1)h$ for $k=1,\dots,K$, where $h=0.2$ and $K=101$, and the observation noise variance is set to $\Gamma = \diag (1\times10^{-4},1\times10^{-4})$.

The Kepler equation is a Hamiltonian system, and symplectic integrators are often employed for solving Hamiltonian systems in practical computations~\cite{hl06}.
Thus, we use the Störmer--Verlet method for the numerical solution $\tx(t;\theta)$ (see \cref{app_sv} for details). 

\cref{fig:Kepler_comparison1} plots the squared errors of the IRLS algorithms and the conventional method with respect to the step size.
We set the initial guess to $\theta^{(0)}=[0.5,0.05,-0.05,2.5]^\top$.
Similar to the left figure in \cref{fig:Lorenz_comparison1}, the IRLS algorithms have better estimation accuracy than the conventional method, even with a single iteration ($L=1$).
\revise{
Note that the result of the conventional approach is not improved even when the step size is small.
It seems that the iteration of the optimization solver is stagnated at a local minimum.
This behavior may be improved if another optimization solver is employed.
But we would like to emphasize that the proposed algorithm gives much better results with the same optimization solver.
}

\begin{figure}[htbp]
\centering
\begin{tikzpicture}
\tikzstyle{every node}=[]
\begin{axis}[width=7cm,
xmax=0.5,xmin=0.0001,
ymax=100, ymin = 0.000001,
ymode = log,
xmode = log,
xlabel={step size},ylabel={squared error},
ylabel near ticks,
legend entries={conventional,IRLS(1),IRLS(2),IRLS(3),IRLS},
legend style={legend cell align=left,draw=none,fill=white,fill opacity=0.8,text opacity=1,legend pos=outer north east,},
	]
\addplot[very thick, dotted, color=red,
filter discard warning=false, unbounded coords=discard
] table {
0.2	10.98754166
0.1	9.258978074
0.05	0.758723382
0.025	1.54E+00
0.0125	4.20E-01
0.00625	4.36E-01
0.003125	1.653011483
0.0015625	1.967251717
0.00078125	1.768226139
0.000390625	2.637962882
};
\addplot[very thick, dashed, color=cyan!60,
filter discard warning=false, unbounded coords=discard
] table {
0.2	0.018121209
0.1	0.004282475
0.05	0.002424203
0.025	0.002046015
0.0125	0.001956553
0.00625	0.001934538
0.003125	0.001929055
0.0015625	0.001927686
0.00078125	0.001927343
0.000390625	0.001927259
};
\addplot[very thick, densely dashdotted, color=green!60,
filter discard warning=false, unbounded coords=discard
] table {
0.2	0.013639745
0.1	8.08561E-05
0.05	0.000158522
0.025	6.32847E-06
0.0125	2.51048E-05
0.00625	4.94406E-05
0.003125	5.71955E-05
0.0015625	5.92451E-05
0.00078125	5.97655E-05
0.000390625	5.98956E-05
};
\addplot[very thick, densely dashed, color=red!60,
filter discard warning=false, unbounded coords=discard
] table {
0.2	0.012727184
0.1	0.000864733
0.05	0.001114307
0.025	6.92794E-06
0.0125	2.51716E-05
0.00625	4.94632E-05
0.003125	5.72083E-05
0.0015625	5.92556E-05
0.00078125	5.97753E-05
0.000390625	5.99053E-05
};
\addplot[thick, color=blue,
filter discard warning=false, unbounded coords=discard
] table {
0.2	0.012199018
0.1	0.014650174
0.05	0.00111863
0.025	2.35114E-05
0.0125	2.45661E-05
0.00625	4.88446E-05
0.003125	5.65903E-05
0.0015625	5.86382E-05
0.00078125	5.91578E-05
0.000390625	5.94022E-05
};
\end{axis}
\end{tikzpicture} 

\caption{Comparison of the squared error $\| \hat{\theta}-\theta \|^2$ of IRLS (Algorithm~\ref{alg:irls1}), IRLS($L$) (Algorithm~\ref{alg:irlsL}) with several values of $L$, and the conventional method in \eqref{ml_form_approx} for the Kepler equation.}
\label{fig:Kepler_comparison1}
\end{figure}
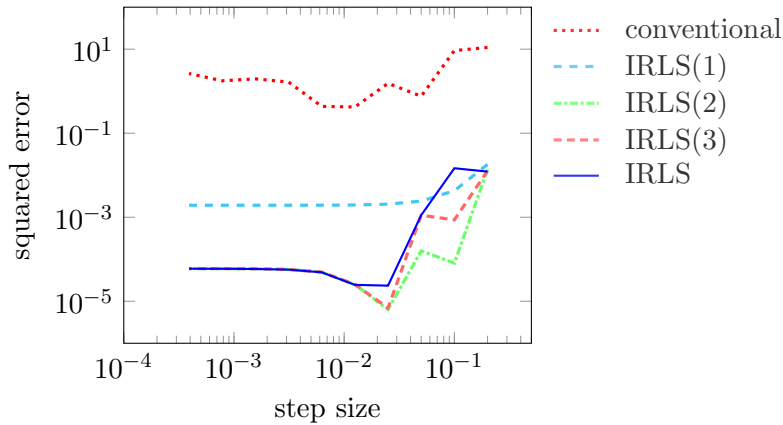

\cref{fig:Kepler_comparison2} shows the numerical solution of the Störmer--Verlet method with the initial state set to the estimate of IRLS(1) or the conventional method.
The trajectory from the estimate of IRLS(1) is fairly close to the observations and well exhibit the elliptic orbit, whereas the one from the estimate of the conventional method does not reproduce the elliptic orbit.

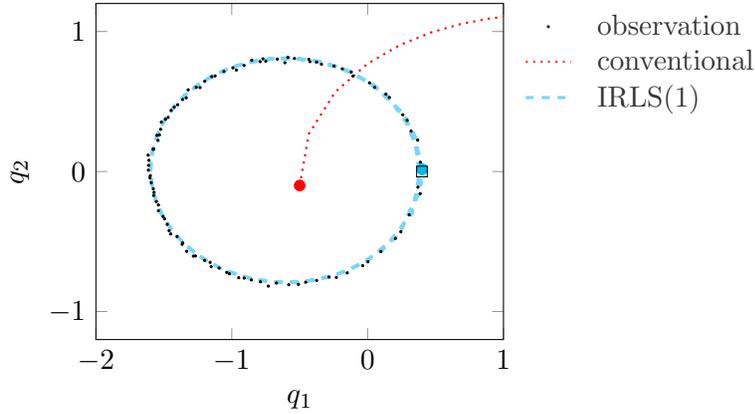
\begin{figure}[htbp]
\centering

\begin{tikzpicture}
\tikzstyle{every node}=[]
\begin{axis}[width=7cm,
xmax=1,xmin=-2,
ymax=1.2, ymin = -1.2,
xlabel={$q_1$},ylabel={$q_2$},
ylabel near ticks,
legend entries={observation,conventional,IRLS(1)},
legend style={legend cell align=left,draw=none,fill=white,fill opacity=0.8,text opacity=1,legend pos=outer north east,},
	]
\addplot[only marks,mark options={scale=0.2},
] table {
    0.3992    0.0160
    0.2909    0.3654
    0.0517    0.6014
   -0.1864    0.7335
   -0.4242    0.7846
   -0.6087    0.7771
   -0.7914    0.7849
   -0.9658    0.7247
   -1.1225    0.6822
   -1.2334    0.5996
   -1.3351    0.5261
   -1.4236    0.4595
   -1.4883    0.3731
   -1.5324    0.2610
   -1.5746    0.1605
   -1.6093    0.0800
   -1.5993   -0.0297
   -1.5778   -0.1229
   -1.5553   -0.2147
   -1.5054   -0.3209
   -1.4612   -0.4223
   -1.3652   -0.5196
   -1.2826   -0.6043
   -1.1591   -0.6491
   -1.0368   -0.7203
   -0.9086   -0.7613
   -0.7305   -0.8185
   -0.5093   -0.8062
   -0.2836   -0.7532
   -0.0363   -0.6723
    0.2049   -0.4786
    0.3872   -0.1581
    0.3683    0.2245
    0.1594    0.5292
   -0.1079    0.6815
   -0.3394    0.7657
   -0.5387    0.8071
   -0.7498    0.7874
   -0.9096    0.7574
   -1.0575    0.7141
   -1.1989    0.6444
   -1.3209    0.5831
   -1.3976    0.4771
   -1.4682    0.3863
   -1.5235    0.3017
   -1.5429    0.2222
   -1.6151    0.1166
   -1.6139    0.0098
   -1.5924   -0.0800
   -1.5752   -0.1708
   -1.5398   -0.2804
   -1.4748   -0.3770
   -1.4025   -0.4645
   -1.3318   -0.5699
   -1.2000   -0.6346
   -1.0933   -0.6898
   -0.9489   -0.7584
   -0.7889   -0.7876
   -0.5808   -0.8086
   -0.3752   -0.7779
   -0.1501   -0.7241
    0.0978   -0.5717
    0.3063   -0.3096
    0.3883    0.0661
    0.2588    0.4052
    0.0142    0.6330
   -0.2475    0.7533
   -0.4329    0.8014
   -0.6648    0.8028
   -0.8528    0.7949
   -0.9915    0.7408
   -1.1471    0.6774
   -1.2617    0.6006
   -1.3610    0.5141
   -1.4459    0.4359
   -1.5183    0.3526
   -1.5506    0.2583
   -1.5788    0.1583
   -1.6126    0.0438
   -1.6028   -0.0449
   -1.5720   -0.1484
   -1.5399   -0.2388
   -1.4940   -0.3378
   -1.4491   -0.4457
   -1.3608   -0.5095
   -1.2676   -0.6031
   -1.1515   -0.6799
   -1.0175   -0.7305
   -0.8554   -0.7740
   -0.6708   -0.7965
   -0.4562   -0.7905
   -0.2300   -0.7561
    0.0024   -0.6443
    0.2443   -0.4303
    0.3705   -0.1118
    0.3200    0.2889
    0.1349    0.5517
   -0.1201    0.7108
   -0.3756    0.7816
   -0.5870    0.8150
   -0.7704    0.8047
};
\addplot[thick, dotted, color=red,
filter discard warning=false, unbounded coords=discard
] table {
   -0.5000   -0.0999
   -0.4325    0.2687
   -0.2402    0.5659
   -0.0017    0.7691
    0.2392    0.9040
    0.4692    0.9939
    0.6853    1.0534
    0.8876    1.0914
    1.0772    1.1136
    1.2551    1.1237
    1.4226    1.1244
    1.5805    1.1174
    1.7297    1.1043
    1.8709    1.0861
    2.0046    1.0635
    2.1315    1.0373
    2.2520    1.0080
    2.3665    0.9759
    2.4754    0.9416
    2.5789    0.9052
    2.6774    0.8670
    2.7710    0.8273
    2.8601    0.7862
    2.9448    0.7439
    3.0253    0.7006
    3.1017    0.6563
    3.1742    0.6112
    3.2430    0.5653
    3.3082    0.5188
    3.3698    0.4718
    3.4280    0.4243
    3.4829    0.3764
    3.5345    0.3281
    3.5830    0.2795
    3.6283    0.2307
    3.6707    0.1817
    3.7101    0.1326
    3.7466    0.0833
    3.7802    0.0340
    3.8111   -0.0154
    3.8392   -0.0647
    3.8646   -0.1140
    3.8873   -0.1632
    3.9074   -0.2123
    3.9248   -0.2612
    3.9397   -0.3100
    3.9520   -0.3586
    3.9618   -0.4069
    3.9691   -0.4550
    3.9739   -0.5029
    3.9763   -0.5503
    3.9761   -0.5975
    3.9735   -0.6443
    3.9685   -0.6907
    3.9611   -0.7367
    3.9512   -0.7822
    3.9389   -0.8272
    3.9242   -0.8718
    3.9071   -0.9157
    3.8876   -0.9592
    3.8656   -1.0020
    3.8412   -1.0442
    3.8144   -1.0858
    3.7851   -1.1266
    3.7534   -1.1667
    3.7192   -1.2061
    3.6825   -1.2446
    3.6433   -1.2823
    3.6016   -1.3191
    3.5573   -1.3549
    3.5104   -1.3898
    3.4610   -1.4237
    3.4088   -1.4564
    3.3541   -1.4880
    3.2965   -1.5185
    3.2363   -1.5476
    3.1732   -1.5754
    3.1073   -1.6018
    3.0385   -1.6267
    2.9667   -1.6499
    2.8918   -1.6716
    2.8139   -1.6914
    2.7328   -1.7093
    2.6484   -1.7251
    2.5606   -1.7388
    2.4694   -1.7501
    2.3747   -1.7589
    2.2762   -1.7649
    2.1740   -1.7681
    2.0678   -1.7679
    1.9574   -1.7643
    1.8428   -1.7568
    1.7238   -1.7451
    1.6000   -1.7286
    1.4714   -1.7068
    1.3376   -1.6790
    1.1984   -1.6445
    1.0535   -1.6020
    0.9027   -1.5505
    0.7456   -1.4882
    0.5820   -1.4129
};
\addplot[very thick, dashed, color=cyan!60, opacity=0.7,
filter discard warning=false, unbounded coords=discard
] table {
    0.4025    0.0156
    0.2879    0.3759
    0.0583    0.6094
   -0.1872    0.7363
   -0.4173    0.7947
   -0.6248    0.8083
   -0.8089    0.7912
   -0.9707    0.7520
   -1.1115    0.6965
   -1.2326    0.6285
   -1.3351    0.5509
   -1.4199    0.4660
   -1.4876    0.3755
   -1.5388    0.2808
   -1.5739    0.1832
   -1.5932    0.0837
   -1.5968   -0.0166
   -1.5847   -0.1167
   -1.5568   -0.2157
   -1.5129   -0.3125
   -1.4525   -0.4058
   -1.3752   -0.4944
   -1.2802   -0.5766
   -1.1667   -0.6504
   -1.0336   -0.7133
   -0.8796   -0.7616
   -0.7033   -0.7904
   -0.5033   -0.7921
   -0.2792   -0.7548
   -0.0347   -0.6585
    0.2106   -0.4689
    0.3846   -0.1486
    0.3587    0.2384
    0.1605    0.5276
   -0.0845    0.6940
   -0.3224    0.7778
   -0.5394    0.8084
   -0.7332    0.8033
   -0.9042    0.7729
   -1.0537    0.7239
   -1.1830    0.6610
   -1.2933    0.5874
   -1.3855    0.5056
   -1.4604    0.4174
   -1.5187    0.3245
   -1.5607    0.2280
   -1.5868    0.1292
   -1.5972    0.0292
   -1.5919   -0.0712
   -1.5708   -0.1708
   -1.5338   -0.2687
   -1.4805   -0.3637
   -1.4105   -0.4547
   -1.3232   -0.5400
   -1.2177   -0.6178
   -1.0931   -0.6859
   -0.9481   -0.7411
   -0.7813   -0.7792
   -0.5913   -0.7938
   -0.3771   -0.7751
   -0.1399   -0.7071
    0.1100   -0.5621
    0.3288   -0.2986
    0.3980    0.0808
    0.2553    0.4236
    0.0196    0.6371
   -0.2239    0.7514
   -0.4501    0.8020
   -0.6535    0.8107
   -0.8338    0.7902
   -0.9922    0.7487
   -1.1299    0.6915
   -1.2481    0.6223
   -1.3480    0.5439
   -1.4302    0.4584
   -1.4956    0.3674
   -1.5445    0.2724
   -1.5775    0.1745
   -1.5946    0.0749
   -1.5960   -0.0254
   -1.5818   -0.1255
   -1.5517   -0.2244
   -1.5055   -0.3208
   -1.4427   -0.4138
   -1.3629   -0.5018
   -1.2653   -0.5833
   -1.1489   -0.6561
   -1.0127   -0.7174
   -0.8552   -0.7636
   -0.6751   -0.7894
   -0.4710   -0.7865
   -0.2427   -0.7421
    0.0046   -0.6341
    0.2463   -0.4255
    0.3964   -0.0848
    0.3347    0.2954
    0.1227    0.5628
   -0.1223    0.7137
   -0.3568    0.7880
   -0.5698    0.8127
   -0.7597    0.8036
};
\addplot[smooth,mark=*,red] plot coordinates {
    (-0.5,-0.0998997723600318)
};
\addplot[smooth,mark=*,cyan!80] plot coordinates {
    (0.4025,    0.0156)
};
\addplot[smooth,mark=square,] plot coordinates {
    (0.4, 0)
};
\end{axis}
\end{tikzpicture} 

\caption{
Numerical solutions of the Störmer--Verlet method (step size $\Delta t=0.02$) with the initial state set to the estimate of IRLS(1) and the conventional method in \eqref{ml_form_approx} for the Kepler equation.
The true initial state is indicated by the square mark. 
}
\label{fig:Kepler_comparison2}
\end{figure}

\section{Concluding remarks}
\label{sec:discussion}

In this paper, we developed a parameter estimation method for ODE models that quantifies the discretization error \revise{based on data}. By modeling the discretization error as random variables, the proposed method alternately updates the discretization error variance and the ODE parameter.
The algorithm for isotonic regression (PAVA) \cite{bb72,rwd88,vE06} is employed for the update of the discretization error variance, whereas the adjoint system and symplectic partitioned Runge--Kutta method \cite{ss16} are used for the update of the ODE parameter. 
Experimental results on several ODE models demonstrated that the proposed method attains robust estimation by successfully quantifying the reliability of numerical solutions.
Also, the proposed method showed better estimation accuracy even when the numerical solution is sufficiently accurate.
Since the proposed method converged in a few iterations in most cases, even one or two iterations is expected to provide a better estimate than the conventional method in practice.

We point out several directions for future work.
While we modeled the discretization error by Gaussian random variables and assumed these random variables to be independent, these assumptions may not be suitable in some cases.
For example, for an ODE system with almost periodic orbit, the discretization error is also considered to be almost periodic.
Thus, it is {important} to extend the proposed method to account for the dependence between discretization error at different time points (also see \cref{rem_brown}).
Also, it is an interesting problem from the viewpoint of numerical analysis to investigate the behavior of the estimates of discretization error variance theoretically. 
\revisee{Finally, the proposed method is expected to be more beneficial for large-scale ODE inverse problems and we leave such practical applications to our future work.}

\appendix
\section{Proof of \cref{thm:pava}}
\label{app_proof}

We prove the following more general statement, of which \cref{thm:pava} is a special case.

\begin{theorem}
Let $\hat{\nu} = (\hat{\nu}_1,\dots,\hat{\nu}_K)$ be the optimal solution of the minimization problem
\begin{align}
    \label{prob1}
    \min_{\nu_1\leq\cdots\leq \nu_K}
    \sum_{k=1}^K \paren*{\Phi(\nu_k) - \nu_k s_k},
\end{align}
where $\Phi$ is a strictly convex function.
Then, the optimal solution $\hat{\mu} = (\hat{\mu}_1,\dots,\hat{\mu}_K)$ of the minimization problem
\begin{align}
    \label{prob2}
    \min_{\alpha \leq \mu_1\leq\cdots\leq \mu_K}
    \sum_{k=1}^K \paren*{\Phi(\mu_k) - \mu_k s_k}
\end{align}
is given by $\hat{\mu}_k = \max (\hat{\nu}_k, \alpha)$.
\end{theorem}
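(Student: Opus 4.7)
The plan is to verify that $\hat{\mu}_k := \max(\hat{\nu}_k,\alpha)$ satisfies the Karush--Kuhn--Tucker (KKT) conditions of~\cref{prob2}; since $\Phi$ is strictly convex, \cref{prob2} is a strictly convex program and KKT is both necessary and sufficient for optimality, so a candidate satisfying KKT \emph{is} the unique optimum. To motivate, first dispose of the trivial case $\hat{\nu}_1\geq \alpha$: here $\hat{\nu}$ already lies in the feasible set of~\cref{prob2}, which is contained in that of~\cref{prob1}, so $\hat{\nu}$ also minimizes over the smaller set and $\hat{\mu}=\hat{\nu}=\max(\hat{\nu},\alpha)$ immediately.

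For the remaining case, put $K_0:=\max\{k:\hat{\nu}_k<\alpha\}\geq 1$, so that $\hat{\mu}_k=\alpha$ for $k\leq K_0$ and $\hat{\mu}_k=\hat{\nu}_k$ for $k>K_0$. Primal feasibility is immediate since the pointwise maximum of two non-decreasing sequences is non-decreasing and $\hat{\mu}_1\geq\alpha$. Let $(\lambda_k)_{k=1}^{K-1}$ be the monotonicity multipliers of~\cref{prob1}; complementary slackness in~\cref{prob1} forces $\lambda_{K_0}=0$ because $\hat{\nu}_{K_0}<\alpha\leq\hat{\nu}_{K_0+1}$ makes the constraint strict. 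Define the multipliers of~\cref{prob2} by $\lambda'_k:=\lambda_k$ for $k\geq K_0$, so that the stationarity and slackness identities at indices $k>K_0$ are inherited verbatim from~\cref{prob1}; and on $\{1,\dots,K_0\}$ solve stationarity backwards from $\lambda'_{K_0}=0$, obtaining
\begin{align*}
\lambda'_k \;=\; (K_0-k)\,\Phi'(\alpha)\;-\;\sum_{j=k+1}^{K_0} s_j,\qquad 0\leq k\leq K_0,
\end{align*}
with the lower-bound multiplier taken to be $\beta:=\lambda'_0$. Slackness for the monotonicity constraints on $\{1,\dots,K_0-1\}$ is trivial since $\hat{\mu}$ is constant at $\alpha$ there, and $\beta(\hat{\mu}_1-\alpha)=0$ since $\hat{\mu}_1=\alpha$.

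The crux is dual feasibility $\lambda'_k\geq 0$ on $\{0,\dots,K_0-1\}$, which rearranges to
\begin{align*}
\frac{1}{K_0-k}\sum_{j=k+1}^{K_0} s_j \;\leq\; \Phi'(\alpha) \qquad \text{for all } 0\leq k<K_0.
\end{align*}
This will be proved using the greatest-convex-minorant interpretation of PAVA that already appears in the main text: with $S_j:=\sum_{i\leq j}s_i$ and $S_0=0$, one has $\Phi'(\hat{\nu}_k)=\bar{G}_k-\bar{G}_{k-1}$, where $\bar{G}$ is the greatest convex minorant of $\{(j,S_j)\}_{j=0}^K$ (the block-average value on each PAVA block equals the slope of the corresponding linear piece of $\bar{G}$). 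Since $\hat{\nu}_{K_0}\neq\hat{\nu}_{K_0+1}$, the GCM has a kink at $K_0$, so $\bar{G}_{K_0}=S_{K_0}$. Then using $S_k\geq\bar{G}_k$ (minorant property), convexity of $\bar{G}$ (the secant slope over $[k,K_0]$ is bounded by the left-derivative at $K_0$, which is $\Phi'(\hat{\nu}_{K_0})$), and strict monotonicity of $\Phi'$ with $\hat{\nu}_{K_0}<\alpha$,
\begin{align*}
\frac{S_{K_0}-S_k}{K_0-k}\;\leq\;\frac{\bar{G}_{K_0}-\bar{G}_k}{K_0-k}\;\leq\;\Phi'(\hat{\nu}_{K_0})\;<\;\Phi'(\alpha),
\end{align*}
which is the required bound.

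The main obstacle is justifying the GCM identity $\Phi'(\hat{\nu}_k)=\bar{G}_k-\bar{G}_{k-1}$ for \emph{general} strictly convex $\Phi$, since the paper invokes it explicitly only for $\Phi(\nu)=-\log(-\nu)$; however, the derivation is generic (stationarity on each PAVA block forces $\Phi'(\hat{\nu}_B)$ to equal the block-average of the $s_j$, and the block-average sequence is precisely the GCM-slope sequence of $\{(j,S_j)\}$). Once this identity is in hand, KKT is fully verified and strict convexity of~\cref{prob2} yields uniqueness, delivering $\hat{\mu}_k=\max(\hat{\nu}_k,\alpha)$.
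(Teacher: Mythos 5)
Your proposal is correct, and it shares the paper's overall skeleton: both arguments verify that the candidate $\hat{\mu}_k=\max(\hat{\nu}_k,\alpha)$ together with suitably constructed multipliers satisfies the KKT system of \cref{prob2}, using the same backward recursion to define the multipliers on the constant block $\{1,\dots,K_0\}$ and inheriting everything above $K_0$ from \cref{prob1}. Where you genuinely diverge is in the crucial dual-feasibility step. The paper proves $\hat{\lambda}'_k\geq 0$ by a short induction comparing the new multipliers to the old ones: since $\hat{\nu}_{k}<\alpha$ on the block, strict convexity gives $\Phi'(\alpha)\geq\Phi'(\hat{\nu}_{k})$, whence $\hat{\lambda}'_k\geq\hat{\eta}_k\geq 0$, with no machinery beyond the KKT conditions of \cref{prob1} themselves. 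You instead compute the closed form $\lambda'_k=(K_0-k)\Phi'(\alpha)-\sum_{j=k+1}^{K_0}s_j$ and bound it via the greatest-convex-minorant geometry of PAVA, which forces you to import (and flag as your ``main obstacle'') the identity $\Phi'(\hat{\nu}_k)=\bar{G}_k-\bar{G}_{k-1}$ for general strictly convex $\Phi$. That lemma is classical and your sketch of it is repairable, but it is avoidable: unrolling the stationarity condition of \cref{prob1} gives $\hat{\eta}_k=S_k-\sum_{j\leq k}\Phi'(\hat{\nu}_j)$, so your required inequality
\begin{align*}
\frac{S_{K_0}-S_k}{K_0-k}\leq\frac{1}{K_0-k}\sum_{j=k+1}^{K_0}\Phi'(\hat{\nu}_j)\leq\Phi'(\hat{\nu}_{K_0})<\Phi'(\alpha)
\end{align*}
follows in one line from $\hat{\eta}_k\geq 0$ and monotonicity of $\Phi'$, with no reference to the GCM. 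Indeed the two viewpoints are exactly dual: with the GCM identity in hand one has $\hat{\eta}_k=S_k-\bar{G}_k$, so the minorant property and the touching-at-kinks property you invoke are precisely dual feasibility and complementary slackness of the \cref{prob1} multipliers in geometric disguise. What your route buys is this explicit geometric picture; what the paper's route buys is self-containedness, since it never needs the PAVA/GCM characterization at all.
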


\begin{proof}
The Lagrangian function for \cref{prob1} is given by
\begin{align*}
    L(\nu,\eta)
    =
    \sum_{k=1}^K \paren*{\Phi (\nu_k)-\nu_k s_k }
    +
    \sum_{k=1}^{K-1} \eta_k (\nu_k - \nu_{k+1}).
\end{align*}
The optimal solution $\hat{\nu}$ and its corresponding multiplier $\hat{\eta}$ satisfy the KKT condition \cite{bv04}:
\begin{align}
    \label{KKTa1}
    \left. \frac{\partial F}{\partial \nu_k} \right|_{\nu=\hat{\nu},\: \eta=\hat{\eta}}
    =
    {\Phi}'(\hat{\nu}_k) - s_k + \hat{\eta}_k - \hat{\eta}_{k-1}=0, \quad k = 1,\dots,K,
\end{align}
where $\hat{\eta}_0 = \hat{\eta}_K = 0$ and 
\begin{align}
    \label{KKTa2}
    \hat{\eta}_k \geq 0,\quad 
    \hat{\nu}_k - \hat{\nu}_{k+1} \leq 0, \quad
    \hat{\eta}_k \paren*{\hat{\nu}_k - \hat{\nu}_{k+1} } = 0, \quad
    k = 1,\dots,K-1.
\end{align}

Similarly, the Lagrangian function for \cref{prob2} is defined as
\begin{align*}
    L(\mu,\lambda)
    =
    \sum_{k=1}^K \paren*{\Phi (\mu_k)-\mu_k s_k }
    +
    \sum_{k=0}^{K-1} \lambda_k (\mu_k - \mu_{k+1}),
\end{align*}
where $\mu_0 = \alpha$.
The KKT condition for the optimal solution $\hat{\mu}$ and its corresponding multiplier $\hat{\lambda}$ is given by
\begin{align}
    \label{KKTb1}
    \left. \frac{\partial F}{\partial \mu_k} \right|_{\mu=\hat{\mu},\: \lambda=\hat{\lambda}}
    =
    {\Phi}'(\hat{\mu}_k) - s_k + \hat{\lambda}_k - \hat{\lambda}_{k-1}=0, \quad k = 1,\dots,K,
\end{align}
where $\hat{\lambda}_K = 0$, $\hat{\mu}_0 = \alpha$ and 
\begin{align}
    \label{KKTb2}
    \hat{\lambda}_k \geq 0,\quad 
    \hat{\mu}_k - \hat{\mu}_{k+1} \leq 0, \quad
    \hat{\lambda}_k \paren*{\hat{\mu}_k - \hat{\mu}_{k+1} } = 0, \quad
    k = 0,\dots,K-1.
\end{align}
Note that the KKT condition \cref{KKTb1} and \cref{KKTb2} is not only necessary but also sufficient for the optimality of $\hat{\mu}$, since both the objective function and feasible region are convex \cite{bv04}.

Without loss of generality, assume
\begin{align*}
    \hat{\nu}_1 \leq \cdots \leq \hat{\nu}_{K^\prime}
    < \alpha
    \leq \hat{\nu}_{K^\prime+1} \leq \cdots \leq \hat{\nu}_K.
\end{align*}
Let $\hat{\mu}_k=\alpha$ for $k=1,\dots,K^\prime$ and $\hat{\mu}_k=\hat{\nu}_k$ for $k=K^\prime+1,\dots,K$.
Also, let $\hat{\lambda}_k = \hat{\eta}_k$ for $k=K^\prime,\dots,K-1$ and define $\hat{\lambda}_{K^\prime-1},\dots,\hat{\lambda}_0$ backwardly by $\hat{\lambda}_k=\Phi'(\hat{\mu}_{k+1})-s_{k+1}+\hat{\lambda}_{k+1}$.
We show that $(\hat{\mu},\hat{\lambda})$ satisfies \cref{KKTb1} and \cref{KKTb2}, which completes the proof.

First, from $\hat{\lambda}_k=\Phi'(\hat{\mu}_{k+1})-s_{k+1}+\hat{\lambda}_{k+1}$ for $k=0,\dots,K^\prime-1$, the condition \cref{KKTb1} is satisfied for $k=1,\dots,K^\prime$.
Also, since $\hat{\mu}_k=\hat{\nu}_k$ and $\hat{\lambda}_k = \hat{\eta}_k$ for $k=K^\prime+1,\dots,K$ and \cref{KKTa1} holds, the condition \cref{KKTb1} is satisfied for $k=K^\prime+1,\dots,K$.

From the third condition in \cref{KKTa2} for $k = K^\prime$ and the assumption $\hat{\nu}_{K^\prime} < \hat{\nu}_{K^\prime+1}$, we have $\hat{\lambda}_{K^\prime} = \hat{\eta}_{K^\prime} = 0$.
Since $\Phi$ is strictly convex, 
\begin{align*}
    \hat{\lambda}_{K^\prime-1} = \Phi'(\hat{\mu}_{K^\prime})-s_{K^\prime}+\hat{\lambda}_{K^\prime} \geq \Phi'(\hat{\nu}_{K^\prime})-s_{K^\prime}+\hat{\eta}_{K^\prime} = \hat{\eta}_{K^\prime-1}.
\end{align*}
By induction, we have $\hat{\lambda}_k \geq \hat{\eta}_k$ for $k=0,\dots,K^\prime-1$.
Thus, from the first condition in \cref{KKTa2}, the first condition in \cref{KKTb2} is satisfied for $k=0,\dots,K^\prime-1$.
Also, from $\hat{\lambda}_k = \hat{\eta}_k$ for $k=K^\prime,\dots,K-1$, the first condition in \cref{KKTb2} is satisfied for $k=K^\prime,\dots,K-1$.

From the definition of $\hat{\mu}$, the second condition in \cref{KKTb2} is satisfied for $k=0,\dots,K-1$.

Finally, from $\hat{\lambda}_{K^\prime}=0$ and $\hat{\mu}_k=\alpha$ for $k=0,\dots,K^\prime$, the third condition in \cref{KKTb2} are satisfied for $k=0,\dots,K^\prime$.
In addition, from $\hat{\mu}_k=\hat{\nu}_k$ and $\hat{\lambda}_k = \hat{\eta}_k$ for $k=K^\prime+1,\dots,K-1$ and the third condition in \cref{KKTa2}, the third condition in \cref{KKTb2} is also satisfied for $k=K^\prime+1,\dots,K-1$.
\end{proof}

\section{Geometric Integration and symplectic partitioned Runge--Kutta methods}
\label{app_gi}
Geometric numerical integration methods or structure-preserving numerical methods are numerical methods that preserve or inherit the underlying geometric properties of differential equations. 
The main advantage of geometric numerical integration methods is that in many cases we can expect qualitatively better numerical solutions, especially over a long period of time, than general-purpose methods.
For more details on this subject, see~\cite{hl03,hl06,lr04}.

Consider a coupled system
\begin{align}
    \label{eq:p_system}
    \frac{\rmd}{\rmd t}
    \begin{bmatrix}
    x \\ z
    \end{bmatrix}
    =
    \begin{bmatrix}
    f_1(x,z,t) \\
    f_2(x,z,t)
    \end{bmatrix},\quad
    \begin{bmatrix}
    x(0) \\ z(0)
    \end{bmatrix}
    =
    \begin{bmatrix}
    x_0 \\ z_0
    \end{bmatrix},
\end{align}
where $x\in\bbR^{M_1}$ and $z \in \bbR^{M_2}$ are time-dependent vectors and $f_1:\bbR^{M_1}\times\bbR^{M_2}\times\bbR\to\bbR^{M_1}$, $f_2:\bbR^{M_1}\times\bbR^{M_2}\times\bbR\to\bbR^{M_2}$. 
A partitioned Runge--Kutta (PRK) method applied to \cref{eq:p_system} reads
\begin{align*}
    x_{n+1} = x_n + \Delta t_n \sum_{i=1}^s b_i k_{n,i}, \quad 
    z_{n+1} = z_n + \Delta t_n \sum_{i=1}^s B_i l_{n,i}, 
\end{align*}
where
\begin{align*}
    k_{n,i} = f_1(X_{n,i},Z_{n,i},t_n + c_i \Delta t_n), \quad 
    l_{n,i} = f_2(X_{n,i},Z_{n,i},t_n + C_i \Delta t_n),
\end{align*}
and the internal stages $X_{n,i},Z_{n,i}$ for $i=1,\dots,s$ are defined by
\begin{align*}
    X_{n,i} = x_n + \Delta t_n \sum_{i=1}^s a_{ij}k_{n,j}, \quad
    Z_{n,i} = z_n + \Delta t_n \sum_{i=1}^s A_{ij}k_{n,j}.
\end{align*}
Note that in this appendix $t_n$ does note mean the time points the observations are made, but it simply means the time grid of the numerical method: $t_{n+1}-t_n = \Delta t_n$.

It is known that a PRK method preserves certain quadratic invariants if the coefficients of the method satisfies a certain condition. 
More precisely, the following theorem holds.

\begin{theorem}[e.g.~{\cite[Chapter~IV, Theorem~2.4]{hl06}}, {\cite[Theorem~2.4]{ss16}}]
Assume that $S:\bbR^{M_1}\times\bbR^{M_2}\to\bbR$ is a real valued bilinear map, and the solution to \cref{eq:p_system} satisfies
\begin{align*}
    \frac{\rmd}{\rmd t} S(x(t),z(t)) = 0.
\end{align*}
If the PRK coefficients satisfy
\begin{align}
    \label{symp_cond}
    b_i = B_i, \quad i=1,\dots,s,
    \quad
    b_{i}A_{ij} + B_j a_{ji} = b_iB_i,
    \quad i,j=1,\dots,s
\end{align}
and
\begin{align}
    \label{symp_cond_c}
    c_i = C_i, \quad i=1,\dots,s,
\end{align}
it follows that for the solution to the PRK method $S(x_n,z_n)$ is constant, i.e. it is independent of $n$.
\end{theorem}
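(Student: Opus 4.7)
\medskip

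\noindent\textbf{Proof plan.} The plan is to expand $S(x_{n+1},z_{n+1})$ directly by bilinearity, eliminate the external state variables $x_n,z_n$ in favor of the internal stages $X_{n,i},Z_{n,i}$, and then use the assumed invariance of $S$ along the exact flow to cancel the linear-in-$\Delta t_n$ terms. The only residual terms will be quadratic in $\Delta t_n$, and they will vanish precisely under the symplecticity conditions \cref{symp_cond}.

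First, I would differentiate the hypothesis $\tfrac{\mathrm d}{\mathrm d t}S(x(t),z(t))=0$. Since $S$ is bilinear and the relation must hold for every solution of \cref{eq:p_system} (equivalently, at every point $(x,z)$ in phase space once we vary the initial data), this gives the pointwise identity
\begin{equation*}
S(f_1(x,z,t),z)+S(x,f_2(x,z,t))=0 \qquad \text{for all } x,z,t.
\end{equation*}
In particular, evaluating at the internal-stage points and using $c_i=C_i$ from \cref{symp_cond_c} so that $k_{n,i}$ and $l_{n,i}$ share the same time argument, we obtain $S(k_{n,i},Z_{n,i})+S(X_{n,i},l_{n,i})=0$ for each $i$.

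Next, I would expand
\begin{equation*}
S(x_{n+1},z_{n+1})=S(x_n,z_n)+\Delta t_n\sum_i b_i S(k_{n,i},z_n)+\Delta t_n\sum_i B_i S(x_n,l_{n,i})+\Delta t_n^2\sum_{i,j}b_i B_j S(k_{n,i},l_{n,j}),
\end{equation*}
and substitute $z_n=Z_{n,i}-\Delta t_n\sum_j A_{ij}l_{n,j}$ and $x_n=X_{n,i}-\Delta t_n\sum_j a_{ij}k_{n,j}$ into the two linear sums. Using $b_i=B_i$ together with the pointwise identity from the previous step, the $O(\Delta t_n)$ contribution collapses to zero. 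What remains is
\begin{equation*}
\Delta t_n^2\sum_{i,j}\bigl(b_i B_j-b_i A_{ij}-B_j a_{ji}\bigr)S(k_{n,i},l_{n,j}),
\end{equation*}
after relabeling indices in the cross term coming from $B_i a_{ij}S(k_{n,j},l_{n,i})$. The bracketed coefficient is exactly the symplecticity condition in \cref{symp_cond}, so this residual vanishes and $S(x_{n+1},z_{n+1})=S(x_n,z_n)$.

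The proof is essentially bookkeeping; the only subtle point, and the main thing to get right, is the interplay between conditions \cref{symp_cond} and \cref{symp_cond_c}. The condition $c_i=C_i$ is needed so that the bilinear identity at the stage values can be applied (otherwise $k_{n,i}$ is evaluated at $t_n+c_i\Delta t_n$ while $Z_{n,i}$ pairs with $l_{n,i}$ at $t_n+C_i\Delta t_n$, and the pointwise cancellation fails); the condition $b_i=B_i$ is what makes the weighted sum of stage-wise cancellations add up; and the quadratic condition $b_i A_{ij}+B_j a_{ji}=b_i B_j$ is what kills the $O(\Delta t_n^2)$ correction. Once this is organized, iterating over $n$ gives $S(x_n,z_n)=S(x_0,z_0)$ for all $n$.
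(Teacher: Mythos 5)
Your proof is correct, and it is worth noting that the paper itself contains no proof of this statement: the theorem is quoted with citations to Hairer--Lubich--Wanner (Chapter~IV, Theorem~2.4) and to Sanz-Serna's review, and your argument---expand $S(x_{n+1},z_{n+1})$ by bilinearity, trade $x_n$ and $z_n$ for the internal stages, cancel the $O(\Delta t_n)$ terms via the pointwise identity $S(f_1(x,z,t),z)+S(x,f_2(x,z,t))=0$ (valid because the invariance holds along all solutions and a solution passes through every point of phase space), and kill the $O(\Delta t_n^2)$ residual with the coupling condition---is precisely the standard proof given in those references. Your identification of the separate roles of the hypotheses is also accurate: \cref{symp_cond_c} is what makes the stage-wise pointwise cancellation legitimate in the nonautonomous case, $b_i=B_i$ makes the weighted stage contributions collapse, and the quadratic condition in \cref{symp_cond} eliminates the second-order residual.

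Two places where your write-up silently corrects the paper rather than follows it deserve explicit mention. First, the paper's displayed condition \cref{symp_cond} reads $b_iA_{ij}+B_ja_{ji}=b_iB_i$, whereas the coefficient your expansion produces (and the condition in the cited references) is $b_iA_{ij}+B_ja_{ji}=b_iB_j$; the paper's right-hand side is a typo, and your proof only goes through with the corrected form. Second, the paper defines the internal stage as $Z_{n,i}=z_n+\Delta t_n\sum_j A_{ij}k_{n,j}$, but your substitution $z_n=Z_{n,i}-\Delta t_n\sum_j A_{ij}l_{n,j}$ uses the intended definition with $l_{n,j}$ in place of $k_{n,j}$. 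Both corrections are necessary for the theorem to be true as stated, so flagging them explicitly (rather than assuming them tacitly) would make your solution airtight.
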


A PRK method is called a symplectic PRK method if its coefficients satisfy \cref{symp_cond}, because such a PRK method exactly preserves the symplecticity when applied to Hamiltonian systems~\cite{hl06}.

The key idea to obtaining the exact gradient in \cref{subsec:update_theta} is to couple the original system \cref{eq:original1} and the adjoint system, and to apply a symplectic PRK method
(precisely speaking, the variational equations $\dot{\delta} = \nabla _x f(x) \delta$ should be taken into account, but we omit the detail since in practice there is no need to integrate the variational equations).
For example, when $s=1$, if the original system is integrated by using the explicit Euler method \cref{eq:exEuler}, i.e. $(a,b,c)=(1,1,1)$, then the adjoint system should be solved by the method $(A,B,C)=(0,1,1)$ so that the pairs of the coefficients satisfy the conditions \cref{symp_cond} and \cref{symp_cond_c}.
The choice $(A,B,C)=(0,1,1)$ leads to the scheme \cref{eq:Euler_sym}.

\section{The Störmer--Verlet method}
\label{app_sv}

The Störmer--Verlet method is a typical symplectic partitioned Runge--Kutta method \cite{hl06}.
It has been widely used to integrate Hamiltonian systems, especially those from celestial mechanics and molecular dynamics.
We employ the Störmer--Verlet method to integrate the Kepler equation in \cref{subsec:Kepler}.

In general, for a system of ODEs of the form
\begin{align} 
    \label{eq:cm}
    \frac{\rmd}{\rmd t} 
    \begin{bmatrix}
        q \\ p 
    \end{bmatrix}
    = 
    \begin{bmatrix}
        p \\ -f(q)
    \end{bmatrix},
\end{align}
the one step formula $(\tq_n,\tp_n)\mapsto (\tq_{n+1},\tp_{n+1})$ of the Störmer--Verlet method is defined by
\begin{align*}
    \tq_{n+1/2} &= \tq_n + \frac{\Delta t_n}{2} \tp_n,\\
    \tp_{n+1} &= \tp_n - \Delta t_n f(\tq_{n+1/2}) ,\\
    \tq_{n+1} &= \tq_{n+1/2} + \frac{\Delta t_n}{2} \tp_{n+1},
\end{align*}
where $\Delta t_n$ is the time step size.

The adjoint system \cref{eq:adjoint1} for \cref{eq:cm} is
\begin{align}
    \label{eq:adj_sv}
    \frac{\rmd}{\rmd t}
    \begin{bmatrix}
        \lambda \\ \nu
    \end{bmatrix}
    =
    \begin{bmatrix}
        \nabla _q f(q) ^\top \nu \\
        -\lambda
    \end{bmatrix}.
\end{align}
To obtain the exact gradient defined with the numerical solutions of the Störmer--Verlet method, we need to follow the discussion in \cref{app_gi}.
But the difficulty is that while in \cref{app_gi} (and~\cite{ss16}) the original equation is assumed to be solved by a standard Runge--Kutta method, in this case the original equation is solved by the PRK method.
It turns out that the discussion in \cref{app_gi} can be extended to the case that the original equation itself is solved by a PRK method.
To obtain the exact gradient defined with the numerical solutions of the Störmer--Verlet method,
the adjoint system \cref{eq:adj_sv} should be numerically integrated backwardly by the formula
\begin{align*}
    \tnu_{n+1/2}  &= \tnu_n -\frac{\Delta t_n}{2} \tlambda_n \\
    \tlambda_{n+1} &=  \tlambda_n +  \Delta t_n \nabla _q f(\tq_{n+1/2})^\top \tnu_{n+1/2}, \\
    \tnu_{n+1} &= \tnu_{n+1/2}-\frac{\Delta t_n}{2} \tlambda_{n+1}.
\end{align*}

\bibliographystyle{siamplain}
\bibliography{references}

\end{document}